\documentclass[aoas,preprint]{imsart}
\RequirePackage{amsthm,amsmath,amsfonts,amssymb}
\RequirePackage[authoryear]{natbib}
\RequirePackage[colorlinks,citecolor=blue,urlcolor=blue]{hyperref}
\RequirePackage{graphicx}

\startlocaldefs

\theoremstyle{plain}
\newtheorem{theorem}{Theorem}

\newtheorem{lemma}[theorem]{Lemma}

\theoremstyle{definition}

\newtheorem{example}{Example}
\newtheorem{remark}{Remark}

\usepackage{mathtools}
\mathtoolsset{showonlyrefs=true}

\usepackage{proba}
\usepackage{bm}
\usepackage{url}
\usepackage{color}
\usepackage{adjustbox}
\usepackage{anyfontsize}

\def\theenumi{\alph{enumi}}

\DeclareMathOperator*{\Exp}{\E}
\newcommand{\PMSE}{\operatorname{PMSE}}
\newcommand{\MISE}{\operatorname{MISE}}
\newcommand{\AMISE}{\operatorname{AMISE}}

\graphicspath{{figs/}}

\endlocaldefs

\begin{document}

\begin{frontmatter}

\title{Fast boundary-aware spatial intensity estimation on complex domains}
\runtitle{Boundary-aware spatial intensity estimation}
\pdfsubject{}

\begin{aug}
\author[A]{\fnms{Takumi}~\snm{Nakagawa}\ead[label=e1]{nakagawa.t.as@m.titech.ac.jp}}
\author[B]{\fnms{K\={o}saku}~\snm{Takanashi}\ead[label=e2]{kosaku.takanashi@riken.jp}}
\author[C]{\fnms{Kenichiro}~\snm{McAlinn}\ead[label=e3]{kenichiro.mcalinn@temple.edu}}
\author[A,B]{\fnms{Takafumi}~\snm{Kanamori}\ead[label=e4]{kanamori@comp.isct.ac.jp}}

\address[A]{Department of Mathematical and Computing Science,
Institute of Science Tokyo\printead[presep={,\ \\ }]{e1,e4}}

\address[B]{Center for Advanced Intelligence Project,
RIKEN\printead[presep={,\ \\ }]{e2}}

\address[C]{Department of Statistics, Operations, and Data Science,
Temple University\printead[presep={,\ \\ }]{e3}}
\end{aug}

\begin{abstract}
Spatial intensity maps are routinely used to summarize point patterns on geographically constrained regions, such as islands, coastlines, watersheds, ecological reserves, and administrative areas with physical barriers. In these settings, the domain is not a nuisance feature-- it determines where probability mass may be assigned and which locations should be smoothed together. Standard kernel density estimators can place mass outside the study region and smooth according to Euclidean distance, while diffusion-based estimators respect the geometry but are costly to recompute when the data subset, bandwidth, or evaluation grid changes. Motivated by repeated hotspot mapping of theft and larceny incidents on Oahu, Hawaii, we propose the projected diffusion kernel density estimator (PDKDE). PDKDE expands the Neumann diffusion kernel in a truncated Laplacian eigenbasis, so that the geometry of a fixed domain is computed once and subsequent density estimates are obtained through explicit spectral coefficients. The resulting estimator preserves the boundary-aware and geometry-respecting behavior of diffusion smoothing while making repeated estimation and least-squares cross-validation computationally practical. For the exact projected estimator, we prove MISE consistency and pointwise consistency up to the boundary. Controlled simulations show that PDKDE reduces boundary and barrier artifacts relative to Euclidean kernels and is substantially faster than the geometry-aware comparators considered after the one-time domain computation. In the Oahu application, the method produces coastline-constrained descriptive maps across time windows in seconds, illustrating its intended fixed-domain, changing-data use case.
\end{abstract}

\begin{keyword}
\kwd{boundary correction}
\kwd{complex domains}
\kwd{diffusion kernel}
\kwd{kernel density estimation}
\kwd{spatial intensity estimation}
\kwd{spatial statistics}
\end{keyword}

\end{frontmatter}

\hypersetup{pdfsubject={}}

\section{Introduction}\label{sec:intro}

Spatial point pattern analyses often begin with a smoothed intensity map. Such maps are used to summarize where events are concentrated, compare patterns across time periods or subpopulations, and communicate spatial structure to domain experts. In many applications, however, the relevant support is not all of $\mathbb{R}^d$ but a bounded and geometrically complex region. The shape of this region is part of the statistical problem: probability mass should not be assigned to infeasible locations, and smoothing should not pass through physical barriers or across separated parts of the domain.

Our motivating application is the construction of spatial intensity maps for theft and larceny incidents on Oahu, Hawaii. As illustrated in Figure~\ref{hawaii_data}, many incidents occur near the coastline, while the ocean is outside the support of the process being mapped. A useful estimator in this setting must therefore do more than produce a visually smooth surface. It must assign density only on the island, preserve intensity along the coastline rather than pulling it artificially inland, adapt to the island's geometry, and remain fast enough to be recomputed across time windows and data updates. Similar requirements arise in ecological surveys on protected regions, environmental monitoring in lakes or watersheds, disease mapping on administrative areas, and other spatial analyses where the study domain is a scientific constraint rather than a plotting boundary.

\begin{figure}[t]
    \centering \includegraphics[width=0.7\textwidth]{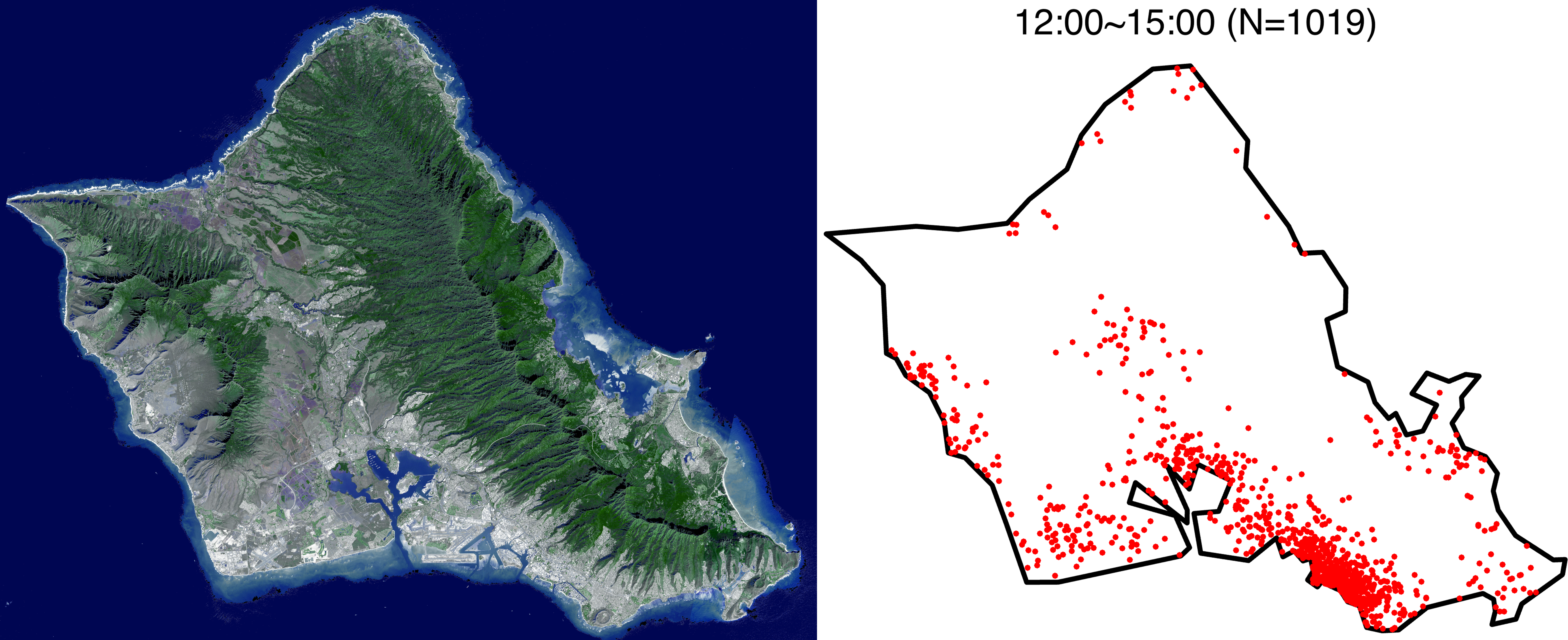}
    \caption{(Left) Satellite image of Oahu, Hawaii (Image Credit: NASA/GSFC/METI/ERSDAC/JAROS, and U.S./Japan ASTER Science Team) and (Right) theft/larceny data from 11/27/2024  to 5/28/2025 between 12:00 and 15:00.
    }\label{hawaii_data}
\end{figure}

Existing methods capture only part of this applied need. Boundary-corrected kernels such as renormalization \citep{jones1993simple} reduce leakage outside the domain, but they still smooth according to Euclidean distance and can therefore blur across barriers or non-convex geometry. Diffusion-based KDE \citep{botev2010kernel} is more faithful to the support because smoothing is defined by heat flow inside the domain with an insulated boundary. Its direct use, however, requires repeatedly solving a partial differential equation (PDE), making bandwidth selection and repeated estimation costly on realistic spatial domains \citep{baddeley2022diffusion}.

We propose the projected diffusion kernel density estimator (PDKDE) as a practical way to address this gap. The estimator starts from the Neumann diffusion kernel, which respects the boundary and geometry of the domain, and then represents that kernel through a truncated Laplacian eigenbasis. This converts diffusion smoothing from a repeated PDE computation into an explicit spectral estimator. The domain-specific eigenbasis is computed once for a fixed study region; after that, new estimates, evaluation grids, and bandwidth choices require only projection coefficients and analytic expressions. The method is therefore designed for the fixed-domain, changing-data setting that arises in time-binned hotspot maps and related spatial monitoring problems.

This paper makes five main contributions:
\begin{enumerate}
\item We formulate boundary-aware KDE for complex spatial domains as a fixed-domain, changing-data problem motivated by real geographic intensity mapping.
\item We introduce PDKDE, a spectral approximation to diffusion KDE that preserves boundary and geometry adaptation while avoiding repeated PDE solves.
\item We derive an analytic least-squares cross-validation criterion, making automatic bandwidth selection feasible without manual tuning or repeated numerical PDE computation.
\item We provide theoretical guarantees showing that the exact projected estimator is MISE-consistent and pointwise consistent up to the boundary when the truncation level grows appropriately with the sample size and bandwidth.
\item Through controlled geometric stress tests and an application to Oahu theft and larceny data, we show that PDKDE reduces visible boundary and geometry artifacts while providing the speed needed for repeated spatial analyses on a fixed complex domain.
\end{enumerate}

Table~\ref{table:comparison} summarizes the gap addressed by PDKDE. The comparison is qualitative, because some methods are excellent on simple supports but do not extend cleanly to arbitrary coastlines, holes, and barriers. The relevant distinction for our setting is whether a method treats the domain as the statistical support and can be reused efficiently when the same support is analyzed repeatedly.

The remainder of this paper is organized as follows. Section~\ref{sec:related} reviews related literature. Section~\ref{sec:dkde} introduces the basic concepts of diffusion KDE and motivates our projected construction. Section~\ref{sec:pdkde} develops our proposed PDKDE, detailing its spectral derivation, finite-element implementation, and an analytic least-squares cross-validation criterion for efficient bandwidth selection. Section~\ref{sec:experiments} presents controlled numerical experiments, while Section~\ref{sec:application} discusses the Oahu theft and larceny application. Section~\ref{sec:discussion} concludes the paper. All proofs and additional experiments are provided in the appendices.

\begin{table}[t!]
    \centering
    \caption{Qualitative comparison of density estimators for bounded spatial domains. ``Support'' means that estimates are defined as densities on the study region rather than on an artificial Euclidean extension. ``Barrier aware'' means that smoothing respects non-convex geometry, holes, or physical barriers.}\label{table:comparison}
    \begin{adjustbox}{width=\textwidth}
    \begin{tabular}{lllllll}
    \hline
    Method & Support & Boundary behavior & Barrier aware & Repeated use on fixed $D$ & Bandwidth selection & Implementation burden \\ \hline
    Standard KDE & No & Biased near boundary & No & Fast but invalid & Standard plug-in/CV & Low \\
    Reflection & Simple domains & Good on flat/product boundaries & No & Fast on simple domains & Standard plug-in/CV & Low--moderate \\
    Renormalized KDE & Yes & Reduces leakage & No; Euclidean smoothing & Slow if normalizers recomputed & Standard plug-in/CV & Moderate \\
    Direct diffusion KDE & Yes & Neumann boundary & Yes & Slow; repeated PDE solves & Costly CV & High \\
    DE-PDE / \texttt{fdaPDE} & Yes & Empirically good & Yes & Slow if refit repeatedly & CV over penalty grid & High \\
    PDKDE & Yes & Pointwise consistent up to boundary$^{\dagger}$ & Yes & Fast after eigenbasis & Analytic LSCV & Moderate after preprocessing \\ \hline
    \end{tabular}
    \end{adjustbox}
    \begin{flushleft}\footnotesize
    $^{\dagger}$ The formal consistency statement applies to the exact truncated spectral estimator $g_M$ under the assumptions of Theorems~\ref{proposition: MISE of g_M}--\ref{proposition: g_M, weak consistency}; the finite-element approximation adds the discretization error discussed in Section~\ref{sec:pdkde}.
    \end{flushleft}
\end{table}

\paragraph*{Notation}

For an open set $D\subset \mathbb{R}^d$, $\overline D$ is its closure, $\partial D = \overline D \setminus D$ its boundary, and $|D| = \int_D dx$ its Lebesgue measure. The Laplacian is $\Delta_x = \sum_{i=1}^d \frac{\partial^2}{\partial x_i^2}$ (or simply $\Delta$), and $\delta$ is the Dirac delta function. $C(D)$ is the set of continuous functions on $D$, and $C^2(D)$ is the set of twice continuously differentiable functions on $D$. $L^2(D)$ is the space of square-integrable functions with inner product $\langle f, g \rangle_{L^2(D)} = \int_D f(x) g(x) dx$ and norm $\|f\|_{L^2(D)} = \sqrt{\langle f, f \rangle_{L^2(D)}}$. We define the supremum norm as $\|f\|_\infty := \sup_{x\in D} |f(x)|$. For a random variable $X$, $\Exp [X]$ and $\mathrm{Var}[X]$ denote its expectation and variance. For a density estimator $\hat{f}$ of $f$, its mean integrated squared error (MISE) is defined as $\MISE(\hat{f}) \coloneqq \Exp\|\hat{f}-f\|_{L^{2}(D)}^{2}$. For functions $a, b: \mathbb{R} \to \mathbb{R}$, we write $a(n) = \Theta(b(n))$ if there exist positive constants $C_1, C_2$ and $n_0$ such that $C_1 |b(n)| \leq |a(n)| \leq C_2 |b(n)|$ for all $n > n_0$. We write $a(n) = \omega(b(n))$ if for any positive constant $C$, there exists $n_0$ such that $|a(n)| \geq C |b(n)|$ for all $n > n_0$.

\section{Related Work}\label{sec:related}

This work lies at the intersection of boundary-corrected density estimation, diffusion-based smoothing, and computational methods for spatial data on complex supports. The common applied question across these literatures is whether the resulting intensity surface is a credible description of events on the actual study region, rather than on an artificial Euclidean extension of that region.

Kernel density estimation with symmetric kernels is known to suffer from boundary bias when the support has a boundary. In one dimension, many correction strategies have been proposed, including reflection \citep{jones1993simple,KARUNAMUNI2008some,zhang1999improved}, renormalization, generalized jackknife methods \citep{jones1993simple}, and transformations \citep{marron1994transformations,wand1991transformations}. These methods are effective for simple supports. Reflection, for example, can be used on intervals and simple product domains, but does not provide a general construction for arbitrary coastlines, holes, or barriers. The applied spatial settings motivating this paper are multidimensional and irregular in exactly this sense.

Renormalization and generalized jackknife methods can, in principle, be applied to multidimensional domains \citep{MARSHALL2010boundary,HAZELTON2009linear}. Their practical use is limited by numerical integration over the domain, and their smoothing remains Euclidean. Consequently, they can correct the amount of mass near a boundary without correcting which parts of the domain should communicate through smoothing. This distinction matters on domains with narrow channels, holes, or strong barriers \citep{ferraccioli2021nonparametric,baddeley2022diffusion}.

Asymmetric kernels provide another route to boundary correction. Examples include the gamma kernel \citep{chen2000probability} and the diffusion kernel \citep{botev2010kernel}. The diffusion kernel is especially appealing for spatial analysis because it is a natural extension of the Gaussian kernel to bounded domains and respects the geometry of the support through the heat equation \citep{ferraccioli2021nonparametric,baddeley2022diffusion}. It can also be interpreted as a generalization of reflection to complex boundaries \citep{botev2010kernel,baddeley2022diffusion}.

Several papers study the statistical properties of diffusion-based estimators. In one dimension, \cite{botev2010kernel} derived asymptotic MISE expressions and established boundary consistency. Related analyses include \cite{colbrook2020kernel} for linked boundary conditions and \cite{bates2014density} for manifolds without boundary. What remains less developed is a multidimensional treatment that is both theoretically supported and computationally usable on arbitrary spatial domains.

The computational barrier is the main reason diffusion KDE has not become a routine option for applied spatial intensity estimation. Existing implementations often solve the diffusion equation numerically, for example by Euler-type schemes \citep{ferraccioli2021nonparametric,baddeley2022diffusion,pelz2023diffusion}. Because the computation must be repeated when the data or smoothing parameter changes, cross-validation and time-window analyses can become expensive. Analytical series representations are available on simple domains such as rectangles \citep{botev2010kernel,yang2023exact}, but not directly on general geographic shapes.

Our contribution is to approximate the Neumann diffusion estimator by a finite spectral estimator that retains the relevant support and geometry behavior, admits MISE and pointwise consistency results up to the boundary for the exact projected estimator, and replaces repeated PDE solves by reusable domain-specific eigenfunctions.

\section{Background: Diffusion Kernel Density Estimator}\label{sec:dkde}

This section introduces diffusion KDE as the boundary-aware target estimator that PDKDE approximates. The emphasis is on the two features that matter in the motivating spatial applications: mass is retained inside the study region, and smoothing follows the geometry of that region. Detailed proofs and extended theoretical derivations are provided in the appendices.

Let $D\subset \mathbb{R}^d$ be a bounded open set with piecewise smooth boundary, and let $\overline D$ be its closure. We observe i.i.d. data $X_1,\dotsc,X_N\sim f$, where $f:\overline D\to\mathbb{R}$ is a continuous probability density on $\overline D$. A kernel density estimator has the form
\begin{equation}\label{eq: Gaussian kernel density estimator}
\hat{f}(x;t)=\frac{1}{N}\sum_{i=1}^{N}k(x;X_{i};t),\qquad x\in\overline{D},\ t>0,
\end{equation}
where $k$ is the kernel and $\sqrt{t}$ plays the role of bandwidth.

The classical Gaussian kernel is
\begin{equation}
\phi(x;y;t)=\frac{1}{\sqrt{(2\pi t)^{d}}}e^{-\frac{|x-y|^{2}}{2t}},\qquad x,y\in\mathbb{R}^{d},\ t>0.
\end{equation}
It assigns more weight to observations that are close to $x$ in Euclidean distance. This works well on $\mathbb{R}^d$, but when the support is bounded, it assigns nonzero probability mass outside the domain $D$, introducing severe boundary bias.

A common boundary correction method is the renormalized Gaussian kernel
\begin{equation}
\phi_{\mathrm{R}}(x;y;t)=\frac{\phi(x;y;t)}{\int_{D}\phi(x;z;t)dz},\qquad x\in\overline{D},\ t>0,\label{eq: renormalization kernel}
\end{equation}
which adjusts the estimation near the boundary and can reduce boundary bias \citep{jones1993simple, MARSHALL2010boundary}. Even so, renormalization still smooths according to Euclidean distance and may fail to capture the intrinsic geometry of a complex domain.

Figure~\ref{fig: kernel visualization} illustrates this issue on a U-shaped domain. Two points may be close in Euclidean distance while being separated by the geometry of the support. Gaussian and renormalized kernels do not distinguish these cases, whereas the diffusion kernel does.

\begin{figure}[t!]
\centering
\includegraphics[width=0.98\linewidth]{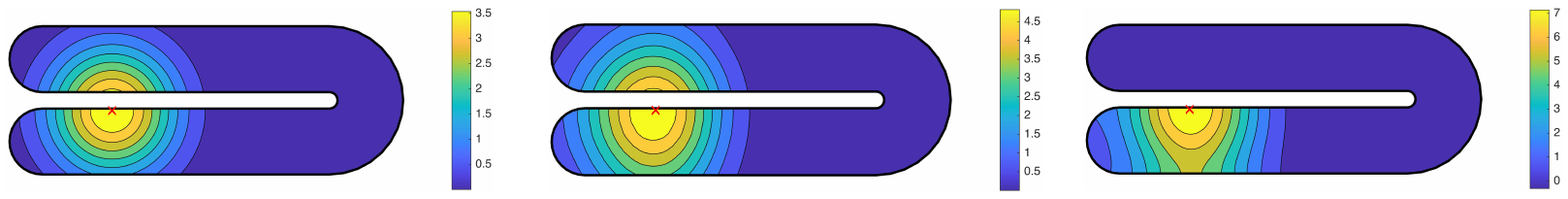}
\caption{Kernels on a U-shaped domain (red x indicates the kernel center): Gaussian kernel (left), renormalized Gaussian kernel (center), and diffusion kernel (right), all with the same bandwidth.}
\label{fig: kernel visualization}
\end{figure}

A natural way to enforce both boundary correction and geometric adaptation is to reinterpret smoothing as a heat diffusion process within the domain. The kernel then represents the heat distribution that gradually diffuses from a data point and is insulated at the boundary. This leads to the heat equation with Neumann boundary conditions.

The diffusion kernel density estimator, or simply the diffusion estimator, is defined as
\begin{equation}
g(x;t)=\frac{1}{N}\sum_{i=1}^{N}\kappa(x;X_{i};t),\qquad x\in\overline{D},\ t>0,\label{eq: diffusion estimator}
\end{equation}
where the diffusion kernel $\kappa(x;y;t)$ is the fundamental solution of
\begin{equation}
    \left\{\begin{aligned}
        &\frac{1}{2} \Delta_x \kappa(x;y;t) = \frac{\partial}{\partial t} \kappa(x;y;t) & &(x\in D,\ t > 0),\\
        &\frac{\partial \kappa}{\partial \bm n} = 0 & &(x\in\partial D,\ t > 0),\\
        &\kappa(x;y;0) = \delta(x-y) & &(x\in D).
    \end{aligned}\right. \label{eq: diffusion kernel, pde}
\end{equation}
Here, $\bm n$ is the outward unit normal vector on $\partial D$.

Although it is defined through a PDE, the diffusion estimator is still a kernel density estimator in the sense of \eqref{eq: Gaussian kernel density estimator}: it averages one kernel contribution per observation. It also recovers familiar special cases. If $D=\mathbb{R}^d$, then $\kappa(x;y;t)=\phi(x;y;t)$. If $D=(0,\infty)\subset\mathbb{R}$, then the solution becomes $\kappa(x;y;t)=\phi(x;y;t)+\phi(x;-y;t)$, which is precisely the reflection method \citep{silverman1986density}.

The diffusion estimator has two properties that are especially important for our purposes. First, by interpreting the smoothing process as heat diffusion, the kernel naturally adapts to the geometric structure of the domain rather than relying on Euclidean distance \citep{gyrya2011Neumann}. Second, because the boundary is insulated, the probability mass does not leak outside the domain, substantially mitigating boundary bias. Detailed theoretical properties of the diffusion estimator are provided in the appendices.

These properties make diffusion KDE a natural target for density estimation on irregular supports. For applied use, however, the estimator must also be recomputed across bandwidths, time windows, or updated datasets. Direct implementations require repeated PDE solves, so the statistically appropriate target is often too slow to use routinely. This computational bottleneck motivates the projected approximation developed next.

\section{Projected Diffusion Kernel Density Estimator}\label{sec:pdkde}

The main obstacle to using diffusion KDE directly is computational. In the spatial settings motivating this paper, the domain is fixed but the analyst may need estimates for different time periods, event categories, bandwidths, or evaluation grids. Generic PDE-based implementations must be rerun when these inputs change. The goal of this section is to recast the diffusion estimator in a spectral form that preserves its statistical meaning while making repeated fixed-domain estimation computationally practical.

When $D$ is bounded, the diffusion kernel admits the eigenfunction expansion \citep{ito1957fundamental}
\begin{equation}
    \kappa(x;y;t) = \sum_{m=1}^\infty e^{-\frac{1}{2} \lambda_m t} u_m(x) u_m(y). \label{eq: expanded diffusion kernel}
\end{equation}
The series converges absolutely and uniformly on $\overline{D}\times \overline{D}$ for any $t>0$. Here $\lambda_m$ and $u_m$ are the eigenvalues and eigenfunctions of the Neumann Laplacian,
\begin{equation}
    \left\{\begin{aligned}
        & - \Delta u = \lambda u & &(x\in D),\\
        &\frac{\partial u}{\partial \bm n} = 0 & &(x\in\partial D),
    \end{aligned}\right. \label{eq: eigenvalue, eigenfunction}
\end{equation}
with $\lambda_1\leq\lambda_2\leq\cdots$, and $\{u_m\}_{m=1}^\infty$ forming a complete orthonormal system of $L^2(D)$.

This expansion converts the diffusion PDE into a domain-specific eigenvalue problem. Crucially, the eigenpairs depend only on $D$, not on the observed data. Once the eigenbasis has been computed for a fixed domain, new datasets, subsamples, and bandwidths can be handled by updating only low-dimensional projection coefficients.

Substituting \eqref{eq: expanded diffusion kernel} into \eqref{eq: diffusion estimator} gives
\begin{align}
    g(x;t) &= \frac{1}{N} \sum_{i=1}^N \sum_{m=1}^\infty e^{-\frac{1}{2} \lambda_m t} u_m(x) u_m(X_i) \\
    &= \sum_{m=1}^\infty e^{-\frac{1}{2} \lambda_m t} u_m(x) \frac{1}{N} \sum_{i=1}^N u_m(X_i).
\end{align}
Approximating this by the first $M$ terms yields
\begin{equation}
    g_M(x;t) = \sum_{m=1}^M e^{-\frac{1}{2} \lambda_m t} u_m(x) \frac{1}{N} \sum_{i=1}^N u_m(X_i).\label{eq: diffusion estimator, partial sum}
\end{equation}
Because $g_M(\cdot;t)$ is the projection of $g(\cdot;t)$ onto $\mathrm{span}\{u_1,\dotsc,u_M\}\subset L^2(D)$, we call it the \emph{projected diffusion kernel density estimator} (PDKDE), or simply the \emph{projected diffusion estimator}. Thus there are three objects to keep distinct. The exact diffusion estimator $g$ is the geometry-aware statistical target defined by the heat kernel. The projected estimator $g_M$ is a finite spectral approximation to that target. The finite-element estimator $\tilde g_M$ in Section~\ref{sec:pdkde} is the numerical version used in our experiments and application.

The projected representation reveals the key simplification. The eigenpairs $(\lambda_m,u_m)$ are computed once for the domain. Given a new sample, the estimator is updated only through the coefficients $\frac{1}{N}\sum_{i=1}^N u_m(X_i)$, so no new PDE solve is required. Thus the method targets exactly the fixed-domain, changing-data structure of many spatial applications.

Note that although the condition $\int_D g_M(x;t)\,dx = 1$ strictly holds, the finite truncation relaxes the pointwise nonnegativity of the exact diffusion estimator. In computation this is treated as a diagnostic rather than hidden by post-processing: the minimum fitted value and integrated negative part can be monitored as $M$ and the mesh resolution are varied. When these quantities are not negligible, increasing $M$ or refining the mesh is preferable to truncating the surface after estimation, because truncation changes the spectral estimator being analyzed.

\begin{remark}\label{remark: computational complexity of g_M}
    Consider the computational cost of \eqref{eq: diffusion estimator, partial sum} relative to standard KDE. If the kernel of standard KDE can be evaluated in constant time, then the cost of evaluating $\hat f(x)$ at one point is $\Theta(N)$. In contrast, if the coefficients $\frac{1}{N}\sum_{i=1}^N u_m(X_i)$ are precomputed, then evaluating $g_M(x;t)$ at one point costs $\Theta(M)$. The projected estimator also requires only $\Theta(M)$ memory to store these coefficients, whereas standard KDE requires $\Theta(N)$ memory. Thus, as long as $M$ can be chosen substantially smaller than $N$ without sacrificing accuracy, the spectral representation yields a genuine computational gain.
\end{remark}

\subsection{Truncation error and choice of cutoff}\label{sec:truncation}

The purpose of this subsection is to connect the computational approximation to the statistical target. The following results quantify the truncation error incurred by replacing $g$ with $g_M$ and indicate how large $M$ must be for the projected estimator to behave like the full diffusion estimator.

\begin{lemma}[Uniform bound on the truncation error]\label{lemma: g_M, L infty bound}
    Let $C>0$ be a constant that depends only on the domain $D$. When $M^{\frac{2}{d}}t$ is sufficiently large, we have
    \begin{equation}
        \|g(\cdot;t) - g_M(\cdot;t)\|_\infty \lesssim \exp \left( \left(\frac{1}{2} - d\right)\log t - \frac{C}{2} M^{\frac{2}{d}} t \right), \label{eq: bound of g-g_M, Linfty}
    \end{equation}
    with probability one.
\end{lemma}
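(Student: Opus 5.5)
Since $g - g_M$ is the tail of the eigenfunction series, the plan is to bound that tail directly and uniformly in the data. For every $x\in\overline D$,
\begin{equation}
|g(x;t)-g_M(x;t)| = \Bigl|\sum_{m>M} e^{-\frac12\lambda_m t}u_m(x)\frac1N\sum_{i=1}^N u_m(X_i)\Bigr| \le \sum_{m>M} e^{-\frac12\lambda_m t}\|u_m\|_\infty^2 .
\end{equation}
The right-hand side does not depend on $X_1,\dots,X_N$, so the bound holds deterministically. This is why the statement holds with probability one; almost surely is needed only because the $X_i$ lie in $\overline D$ and the $u_m$ are continuous there. The proof therefore reduces to two spectral inputs: a lower bound on $\lambda_m$ and an upper bound on $\|u_m\|_\infty$.

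\textbf{Spectral inputs.} For the eigenvalues I will use Weyl's law for the Neumann Laplacian on a bounded domain with piecewise smooth boundary, in the form of a lower bound $\lambda_m \ge C m^{2/d}$ for all $m$ large enough, with $C$ depending only on $D$. The constant $C$ in the statement is this one. For the eigenfunctions I will use a heat-kernel sup-norm bound rather than eigenfunction-specific estimates. The on-diagonal heat kernel bound $\kappa(x;x;s)\lesssim s^{-d/2}$ for $0<s\le 1$, uniformly on $\overline D$, combined with $\kappa(x;x;s)=\sum_m e^{-\lambda_m s/2}u_m(x)^2$, gives $e^{-\lambda_m s/2}\|u_m\|_\infty^2 \lesssim s^{-d/2}$. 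Choosing $s=1/\lambda_m$ yields $\|u_m\|_\infty^2\lesssim \lambda_m^{d/2}$. The heat kernel bound is standard for Lipschitz or piecewise smooth domains with the Neumann condition, via an extension domain and Sobolev embedding.

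\textbf{Summing the tail.} With these inputs the tail is at most a constant times $\sum_{m>M}\lambda_m^{d/2}e^{-\lambda_m t/2}$. On the tail, the function $\lambda\mapsto\lambda^{d/2}e^{-\lambda t/2}$ is decreasing once $\lambda> d/t$. The hypothesis that $M^{2/d}t$ is large guarantees this for all $m>M$, so each $\lambda_m$ can be replaced by $Cm^{2/d}$. I then compare the sum with an integral:
\begin{equation}
\sum_{m>M} m\, e^{-\frac C2 m^{2/d} t}\le \int_{M}^\infty s\,e^{-\frac C2 s^{2/d}t}\,ds .
\end{equation}
The substitution $r=\frac C2 s^{2/d}t$ turns this into an incomplete Gamma function, $t^{-d}\Gamma(d,\frac C2M^{2/d}t)$ up to constants. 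For large argument $a=\frac C2 M^{2/d}t$ it is asymptotic to $a^{d-1}e^{-a}$. This gives $t^{-d}a^{d-1}e^{-a}$, and polynomial factors in $a$ can be absorbed by slightly shrinking $C$.

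\textbf{Main obstacle.} The delicate part is matching the exact prefactor $t^{1/2-d}$. Bookkeeping the powers of $t$ versus $a$ is where the specific exponent arises: for instance, keeping one factor $a^{1/2}$ and writing $a^{1/2}=C'M^{1/d}t^{1/2}$, then absorbing the $M$ powers into the exponential. I expect some care, and possibly a slightly different eigenfunction bound or a different handling of the incomplete Gamma asymptotics, to land precisely on $(\frac12-d)\log t$. The underlying structure is nonetheless the one described: tail sum, Weyl lower bound, heat-kernel control of $\|u_m\|_\infty$, and an incomplete-Gamma estimate. The other genuinely nontrivial ingredient is justifying the uniform Weyl lower bound and the sup-norm eigenfunction bound for piecewise smooth boundaries.
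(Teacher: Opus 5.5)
\textbf{There is a genuine gap: your argument proves a weaker bound than the lemma states, and the bookkeeping fix you suggest cannot close it.} Your tail reduction, the use of only the lower Weyl bound via monotonicity of $\lambda\mapsto\lambda^{d/2}e^{-\lambda t/2}$, and the incomplete-gamma computation are all correct. But what they prove is
\[
\|g(\cdot;t)-g_M(\cdot;t)\|_\infty\lesssim t^{-d}a^{d-1}e^{-a}\lesssim t^{-d}e^{-(1-\varepsilon)a},\qquad a=\tfrac C2M^{2/d}t .
\]
This misses the claim by a factor $t^{-1/2}$. Writing $a^{1/2}=(C/2)^{1/2}M^{1/d}t^{1/2}$ only moves the problem into $M^{1/d}=(2a/(Ct))^{1/2}$. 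The hypothesis makes only the product $a$ large, so you may keep $a$ bounded (above the threshold) and let $t\to0$. Then $t^{-d}e^{-ca}\big/\bigl(t^{\frac12-d}e^{-c'a}\bigr)=t^{-1/2}e^{-(c-c')a}\to\infty$, whatever the constants.

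The loss comes from your eigenfunction estimate. After the substitution $w\propto z^{2/d}t$, everything except the power of $t$ is a function of $a$ alone. A summand $m^\beta e^{-cm^{2/d}t}$ produces the prefactor $t^{-(\beta+1)d/2}$. Your term-by-term bound $\|u_m\|_\infty^2\lesssim\lambda_m^{d/2}\approx m$ gives $\beta=1$, hence $t^{-d}$. The stated $t^{\frac12-d}$ requires $\beta=(d-1)/d$.

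\textbf{How the paper gets the exponent.} It uses the sharper Grieser-type sup-norm bound $\|u_m\|_\infty\lesssim\lambda_m^{(d-1)/4}$. The summand is then $\lesssim e^{-Cm^{2/d}t}m^{(d-1)/d}$. The substitution $w=Cz^{2/d}t$ gives
\[
\tfrac d2(Ct)^{\frac12-d}\int_{CM^{2/d}t}^\infty e^{-w}w^{d-\frac32}\,dw ,
\]
and this integral is at most $2e^{-\frac C2M^{2/d}t}$ once $e^{w/2}\ge w^{d-\frac32}$ on the range of integration.

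\textbf{An alternative that keeps your ingredient.} Apply the heat-kernel bound to the whole tail rather than to each eigenfunction separately. By Cauchy--Schwarz with weights $e^{-\lambda_m t/2}$, and using $\lambda_m\ge\lambda_{M+1}$ for $m>M$,
\[
\sum_{m>M}e^{-\frac12\lambda_m t}\,|u_m(x)u_m(y)|\le e^{-\frac14\lambda_{M+1}t}\sqrt{\kappa(x;x;t/2)\,\kappa(y;y;t/2)}\lesssim t^{-d/2}e^{-\frac14\lambda_{M+1}t}.
\]
Since $t^{-d/2}\le t^{\frac12-d}$ for $t\le1$, this yields the lemma with a different $C$. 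It also avoids Grieser's estimate, which is proved for smooth boundaries rather than piecewise smooth ones.

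Finally, your weaker bound would still suffice for Theorems~\ref{proposition: MISE of g_M} and~\ref{proposition: g_M, weak consistency}. The condition $M_N^{2/d}t_N/\log(t_N^{-1})\to\infty$ kills any fixed power of $t_N$. It just does not prove the lemma as stated.
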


The proof is given in Appendix~\ref{SM:g_M, L infty bound}.

Because the truncation error is uniformly controlled, $g_M$ inherits the large-sample behavior of the diffusion estimator $g$ when $M$ increases at a suitable rate. The following theorem establishes the global consistency of $g_M$, guaranteeing that its mean integrated squared error (MISE) converges to zero over the entire domain.

\begin{theorem}[Mean integrated squared error of $g_M$]\label{proposition: MISE of g_M}
    Let $f\in C(\overline D)$. Assume that the bandwidth $\sqrt{t_N}$ and cutoff $M_N$ satisfy $t_N\rightarrow 0$, $Nt_N^{d/2} \rightarrow \infty$, and $\frac{M_N^{2/d} t_N}{\log (t_N^{-1})}\rightarrow\infty$ as $N\rightarrow\infty$. Then $\Exp \|g(\cdot;t_N) - g_{M_N}(\cdot;t_N)\|_{L^2(D)}^2\rightarrow 0$ and $\MISE(g_{M_N}(\cdot;t_N)) := \Exp \|g_{M_N}(\cdot;t_N) - f\|_{L^2(D)}^2\rightarrow 0$.
\end{theorem}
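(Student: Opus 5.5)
The plan is to split the error into the truncation part and the error of the exact diffusion estimator. By the triangle inequality,
\begin{equation}
\MISE(g_{M_N}(\cdot;t_N)) \le 2\,\Exp\|g(\cdot;t_N)-g_{M_N}(\cdot;t_N)\|_{L^2(D)}^2 + 2\,\MISE(g(\cdot;t_N)).
\end{equation}
It therefore suffices to show two things: the truncation term vanishes, and the exact diffusion estimator is MISE-consistent under $t_N\to0$ and $Nt_N^{d/2}\to\infty$.

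For the truncation term I would use Lemma~\ref{lemma: g_M, L infty bound}. Since $D$ is bounded, $\|h\|_{L^2(D)}^2\le |D|\,\|h\|_\infty^2$, and the lemma's bound holds with probability one. Hence
\begin{equation}
\Exp\|g-g_{M_N}\|_{L^2(D)}^2 \lesssim |D|\exp\!\Big((1-2d)\log t_N - C M_N^{2/d}t_N\Big).
\end{equation}
The hypothesis $M_N^{2/d}t_N/\log(t_N^{-1})\to\infty$ does two jobs here:
\begin{itemize}
\item it guarantees $M_N^{2/d}t_N\to\infty$, so the lemma applies for large $N$;
\item it makes $C M_N^{2/d}t_N$ dominate $(2d-1)\log(t_N^{-1})$, so the exponent tends to $-\infty$.
\end{itemize}
This gives the first claim.

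For $\MISE(g(\cdot;t_N))$, I would use the bias--variance decomposition $\MISE(g) = \|\Exp g - f\|_{L^2(D)}^2 + \int_D \mathrm{Var}[g(x;t)]\,dx$.

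\textbf{Variance.} Since $\mathrm{Var}\le \frac1N\Exp\kappa(x;X;t)^2$, integrating over $x$ and using the semigroup and symmetry property $\int_D \kappa(x;y;t)^2dx=\kappa(y;y;2t)$ gives
\begin{equation}
\int_D\mathrm{Var}[g(x;t)]\,dx \le \frac{1}{N}\sup_{y\in\overline D}\kappa(y;y;2t) \lesssim \frac{1}{N t^{d/2}}.
\end{equation}
This tends to zero. The on-diagonal bound $\kappa(y;y;t)\lesssim t^{-d/2}$ for small $t$ is the standard Neumann heat kernel estimate on a domain with piecewise smooth (Lipschitz) boundary; I take it from the appendix results on the diffusion estimator. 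Equivalently, one can use Weyl's law $\lambda_m\gtrsim m^{2/d}$ together with uniform eigenfunction bounds, summing $\sum_m e^{-\lambda_m t}$.

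\textbf{Bias.} Here $\Exp g(\cdot;t)=P_t f$, where $P_t$ is the Neumann heat semigroup $P_tf=\sum_m e^{-\lambda_m t/2}\langle f,u_m\rangle u_m$. By Parseval,
\begin{equation}
\|P_tf-f\|_{L^2(D)}^2=\sum_m (1-e^{-\lambda_m t/2})^2\langle f,u_m\rangle^2 .
\end{equation}
Each term tends to $0$ as $t\to0$ and is dominated by $\langle f,u_m\rangle^2$, which is summable because $f\in C(\overline D)\subset L^2(D)$. Dominated convergence then gives bias $\to0$. Notably, this step needs only $f\in L^2$, so no smoothness is required.

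The main obstacle is the uniform on-diagonal heat kernel bound $\sup_y\kappa(y;y;t)\lesssim t^{-d/2}$ up to the boundary, since boundary regularity matters there. I would cite it from the appendix or the literature (e.g.\ Gaussian upper bounds for Neumann heat kernels on Lipschitz domains) rather than reprove it. The truncation step, by contrast, is purely the rate bookkeeping above.
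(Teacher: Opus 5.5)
Your proposal is correct and follows the paper's proof. It uses the same $2a^2+2b^2$ split and the same treatment of the truncation term, namely Lemma~\ref{lemma: g_M, L infty bound} combined with $\|h\|_{L^2(D)}^2\le|D|\,\|h\|_\infty^2$; the only difference is that the paper simply cites Theorem~\ref{corollary: MISE of g, Rf} for MISE consistency of $g$, while you re-derive it by a bias--variance argument. One caution on your ``equivalently'' aside: Weyl's law with Grieser-type bounds $\|u_m\|_\infty\lesssim\lambda_m^{(d-1)/4}$ only gives $\sup_y\kappa(y;y;2t)\lesssim t^{1/2-d}$, which is too weak for $d\ge2$ under $Nt_N^{d/2}\to\infty$; integrating instead of taking the supremum does work with Weyl's law alone, since $\int_D f(y)\kappa(y;y;2t)\,dy\le\|f\|_\infty\sum_m e^{-\lambda_m t}\lesssim t^{-d/2}$.
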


The proof is given in Appendix~\ref{proof: MISE of g_M}.

Furthermore, the following theorem guarantees the pointwise convergence of $g_M$ to the true density everywhere, even at the boundary.

\begin{theorem}[Pointwise consistency of the projected diffusion estimator]\label{proposition: g_M, weak consistency}
    Let $f\in C(\overline D)$. Assume that the bandwidth $\sqrt{t_N}$ and cutoff $M_N$ satisfy $t_N\rightarrow 0$, $Nt_N^{d/2} \rightarrow \infty$, and $\frac{M_N^{2/d} t_N}{\log (t_N^{-1})}\rightarrow\infty$ as $N\rightarrow\infty$. Then $\Exp |g_{M_N}(x;t_N) - f(x)|^2$ converges uniformly to $0$ on $\overline D$. Moreover, for any $x\in\overline{D}$,
    \begin{equation*}
        g_{M_N}(x;t_N)-g(x;t_N)\rightarrow_p 0,
    \end{equation*}
    and hence $g_{M_N}(x;t_N)\rightarrow_p f(x)$.
\end{theorem}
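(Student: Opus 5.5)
The plan is to split the error into a truncation part and a diffusion-estimator part, and handle each uniformly in $x$:
\begin{equation*}
\Exp|g_{M_N}(x;t_N)-f(x)|^2 \le 2\,\|g(\cdot;t_N)-g_{M_N}(\cdot;t_N)\|_\infty^2 + 2\,\Exp|g(x;t_N)-f(x)|^2 .
\end{equation*}
Lemma~\ref{lemma: g_M, L infty bound} holds with probability one, so the first term is deterministic and bounded by
\begin{equation*}
\exp\bigl((1-2d)\log t_N - C M_N^{2/d} t_N\bigr).
\end{equation*}
By assumption, $M_N^{2/d}t_N/\log(t_N^{-1})\to\infty$. Hence $M_N^{2/d}t_N\to\infty$, which is the regime in which the lemma applies. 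Moreover, $C M_N^{2/d} t_N$ eventually exceeds $K\log(t_N^{-1})$ for any fixed $K$. Taking $K > 2d$, the exponent is at most $(2d-1-K)\log(t_N^{-1})\to-\infty$. So the truncation term tends to $0$ uniformly on $\overline D$.

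For the second term I use the bias--variance decomposition of the exact diffusion estimator,
\begin{equation*}
\Exp|g(x;t)-f(x)|^2 = \Bigl(\int_D \kappa(x;y;t)f(y)\,dy - f(x)\Bigr)^2 + \tfrac1N \mathrm{Var}[\kappa(x;X;t)].
\end{equation*}
\emph{Variance.} I bound $\mathrm{Var}[\kappa(x;X;t)] \le \|f\|_\infty \int_D \kappa(x;y;t)^2\,dy$. By the semigroup property and the symmetry of the kernel, $\int_D\kappa(x;y;t)^2\,dy = \kappa(x;x;2t)$. I then need the on-diagonal Neumann heat kernel bound $\sup_{x\in\overline D}\kappa(x;x;s)\lesssim s^{-d/2}$ for small $s$. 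This holds for bounded domains with piecewise smooth boundary satisfying an extension or cone condition, and I expect it to be among the diffusion-estimator properties the paper places in the appendix. The variance term is then $O\bigl((Nt_N^{d/2})^{-1}\bigr)\to 0$ uniformly.

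\emph{Bias.} I write $P_tf(x)=\int_D\kappa(x;y;t)f(y)\,dy$, with $\int_D\kappa(x;y;t)\,dy=1$ and $\kappa\ge0$. Then
\begin{equation*}
|P_tf(x)-f(x)| \le \omega_f(\eta) + 2\|f\|_\infty \int_{D\setminus B(x,\eta)}\kappa(x;y;t)\,dy ,
\end{equation*}
where $\omega_f$ is the modulus of continuity of $f$, which is uniform since $\overline D$ is compact. The remaining step is to show that the heat mass outside an $\eta$-ball vanishes uniformly in $x\in\overline D$, including boundary points, as $t\to0$. This follows from Gaussian upper bounds $\kappa(x;y;t)\lesssim t^{-d/2}e^{-c|x-y|^2/t}$ for the Neumann kernel, or equivalently from the uniform stochastic continuity of reflected Brownian motion on $D$.

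This uniform localization near the boundary is the main obstacle. Interior arguments are routine, but at $\partial D$ one needs the heat-kernel (Gaussian or on-diagonal) estimates for the Neumann problem on a merely piecewise smooth domain. I would cite these, for example from results on reflected Brownian motion on Lipschitz domains, rather than prove them. Together, the bias and variance bounds give $\sup_x\Exp|g(x;t_N)-f(x)|^2\to0$, and with the truncation bound this gives the uniform $L^2$ convergence of $g_{M_N}$.

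For the in-probability statements, $|g_{M_N}(x;t_N)-g(x;t_N)|\le\|g-g_{M_N}\|_\infty$ almost surely, and the right-hand side is a deterministic sequence tending to $0$. Hence the difference converges to $0$ almost surely, and in particular in probability. Finally, $g_{M_N}(x;t_N)\to_p f(x)$ follows from the $L^2$ convergence just proved and Chebyshev's inequality, or equivalently from Slutsky together with $g(x;t_N)\to_p f(x)$.
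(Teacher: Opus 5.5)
Your proposal is correct and follows the paper's proof. Both split the error into the truncation term, which is bounded uniformly and deterministically via Lemma~\ref{lemma: g_M, L infty bound}, and the pointwise MSE of the exact diffusion estimator $g$. The only differences are that you re-derive the uniform bias and variance bounds for $g$ from Neumann heat-kernel estimates rather than invoking Theorem~\ref{proposition: MSE of diffusion estimator} (whose proof cites exactly such bounds), and that you spell out the convergence-in-probability conclusions that the paper leaves implicit.
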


The proof is given in Appendix~\ref{proof: g_M, weak consistency}.

A useful implication is obtained under the optimal bandwidth scaling $t_N = \Theta \left( N^{-\frac{2}{d+4}} \right)$. In that case the condition $\frac{M_N^{2/d} t_N}{\log (t_N^{-1})}\rightarrow\infty$ becomes
\begin{equation*}
\frac{M_N^{2/d}}{N^\frac{2}{d+4}\log N}\rightarrow\infty,
\end{equation*}
so it is sufficient to choose
\begin{equation*}
M_N=\omega\left(N^{\frac{d}{d+4}}(\log N)^{\frac{d}{2}}\right).
\end{equation*}
Combined with Remark~\ref{remark: computational complexity of g_M}, this shows that the projected estimator can reduce the cost of density evaluation by an order in the sample size while preserving the asymptotic behavior of the full diffusion estimator. In finite samples, $M$ should be treated as a numerical resolution parameter: it must be large enough that increasing $M$ no longer changes the fitted surface, the selected bandwidth, or boundary diagnostics at the scale of scientific interpretation, but it need not be comparable to $N$. The experiments below use $M=512$; the resulting cost comparison should therefore be read as the cost of using a fixed spectral resolution after the domain has been preprocessed.

\begin{example}
    Consider the one-dimensional domain $D=(0,1)$. The eigenvalues and eigenfunctions of \eqref{eq: eigenvalue, eigenfunction} are
    \begin{equation}
    \begin{gathered}
        \lambda_m = m^2 \pi^2 \quad (m\geq 0),\\
        u_0(x) = 1,\quad u_m(x) = \sqrt{2} \cos (m\pi x) \quad (m\geq 1).
    \end{gathered}\label{eq: eigenfunctions, 1dim}
    \end{equation}
    For convenience the indexing here starts at $m=0$. The projected diffusion estimator becomes
    \begin{equation}
        g_M(x;t) = 1 + \sum_{m=1}^{M-1} e^{-\frac{1}{2}(m\pi)^2t} \sqrt{2}\cos(m\pi x) \frac{1}{N} \sum_{i=1}^N \sqrt{2}\cos(m\pi X_i).
    \end{equation}
    If $t = \Theta(N^{-\frac{2}{5}})$ and $M = \Theta(N^{\frac{1}{5}}\log N)$, then Theorems~\ref{proposition: MISE of g_M} and \ref{proposition: g_M, weak consistency} imply that $g_M$ is MISE-consistent and pointwise consistent up to the boundary, while the computational cost of evaluating $g_M(x;t)$ at a point is only $\Theta(N^{\frac{1}{5}}\log N)$.
\end{example}

\begin{example}
    Let $D\subset \mathbb{R}^d$ and suppose the diffusion bandwidth is chosen at its MISE-optimal order $t_N = \Theta(N^{-\frac{2}{d+4}})$. Then the diffusion estimator attains the usual rate $\MISE(g(\cdot;t_N)) = O(N^{-\frac{4}{d+4}})$ under some norm constraint on $f$; see Theorem~\ref{corollary: MISE of g, Rf} in the appendices. Fix any $\varepsilon>0$ and take $M_N = \Theta(N^{\frac{d}{d+4}+\varepsilon})$. By Lemma~\ref{lemma: g_M, L infty bound},
    \begin{align}
        \Exp\|g(\cdot;t_N) - g_{M_N}(\cdot;t_N)\|_{L^2(D)}^2 &= O \left( t_N^{1-2d} e^{- C M_N^{2/d} t_N} \right) \\
        &= O\left( N^\frac{2(2d-1)}{d+4} e^{-CN^{2 \varepsilon / d}} \right).
    \end{align}
    Therefore,
    \begin{align}
        \MISE(g_{M_N}(\cdot;t_N)) &\lesssim \MISE(g(\cdot;t_N)) + \Exp\|g(\cdot;t_N) - g_{M_N}(\cdot;t_N)\|_{L^2(D)}^2\\
        &= O\left(N^{-\frac{4}{d+4}}\right) + O\left( N^\frac{2(2d-1)}{d+4} e^{-CN^{2 \varepsilon / d}} \right)\\
        &= O\left(N^{-\frac{4}{d+4}}\right).
    \end{align}
    Thus the projected estimator preserves the MISE order of the full diffusion estimator while reducing the evaluation cost to $\Theta(N^{\frac{d}{d+4}+\varepsilon})$.
\end{example}

\subsection{Bandwidth selection}

A practical advantage of PDKDE is that the bandwidth can be selected from an analytic criterion once the spectral coefficients are available. We use least-squares cross-validation (LSCV), which allows the smoothing level to be chosen from the data without rerunning a PDE solver for each candidate bandwidth.

Our goal is to choose $\sqrt{t}$ to minimize the integrated squared error,
\begin{equation*}
\int_D |g_M(x;t)-f(x)|^2 dx = \int_D g_M^2(x;t)dx - 2 \int_D g_M(x;t)f(x)dx + \int_D f^2(x) dx.
\end{equation*}
The first term is explicit, and the second can be approximated by leave-one-out cross-validation. This yields
\begin{align}
    \operatorname{LSCV}(t) &= \frac{1}{N^2} \sum_{m=1}^M e^{-\lambda_m t} \left( \sum_{i=1}^N u_m(X_i) \right)^2 \nonumber \\*
    &\quad - \frac{2}{N(N-1)} \sum_{m=1}^M e^{-\frac{1}{2}\lambda_m t} \left\{ \left(\sum_{i=1}^N u_m(X_i)\right)^2 - \sum_{i=1}^N u_m^2(X_i) \right\}. \label{eq: projected diffusion estimator, cv risk}
\end{align}
A derivation is given in Appendix~\ref{SM: bandwidth selection}.

Minimizing $\operatorname{LSCV}(t)$ is theoretically well founded because
\begin{equation*}
\Exp[\operatorname{LSCV}(t)] = \MISE(g_M(\cdot;t)) - \|f\|_{L^2(D)}^2,
\end{equation*}
so the criterion is an unbiased estimator of the MISE up to an additive constant.

From a computational viewpoint, the same cached quantities that define the estimator also define the cross-validation score. If $\sum_{i=1}^N u_m(X_i)$ and $\sum_{i=1}^N u_m^2(X_i)$ are precomputed for each $m\in \{1,2,\dotsc,M\}$, then evaluating $\operatorname{LSCV}(t)$ at a given bandwidth costs only $O(M)$. By the truncation results in Section~\ref{sec:truncation}, $M$ can be chosen much smaller than $N$ while preserving the asymptotic properties of the estimator. Moreover, because $\operatorname{LSCV}(t)$ is analytic in $t$, it can be optimized by standard one-dimensional numerical methods without repeatedly solving a PDE. In the numerical work we optimize over $\log t$ by a bounded one-dimensional search on a fixed interval scaled to the squared diameter of the domain; the same interval, optimizer, and convergence tolerance are used for all PDKDE-LSCV fits and are reported in the replication code. The reported comparison therefore separates the bandwidth rule from the estimator: PDKDE-ISJ uses the same external plug-in rule as the Euclidean-kernel baselines, while PDKDE-LSCV uses the domain-aware analytic criterion derived above.

\subsection{Finite-element implementation}

If the domain is simple, for example $D=[0,1]^d$, the eigenpairs in \eqref{eq: eigenvalue, eigenfunction} can sometimes be written explicitly. For general domains this is no longer possible, so we compute them numerically by the finite element method \citep{boffi2010finite}.

Let $\tilde V$ be a finite-dimensional function space spanned by basis functions $\{\psi_k\}_{k=1}^K$. Any $u\in \tilde V$ can be written as
\begin{equation}
    u(x) = \sum_{k=1}^K u^k \psi_k(x),
\end{equation}
for coefficients $u^k\in\mathbb{R}$. The finite-element approximation to the eigenproblem is to find $u\in \tilde V$ and $\lambda\in\mathbb{R}$ such that
\begin{equation}
    \int_D \langle \nabla u, \nabla v \rangle dx = \lambda \int_D u v dx \qquad (\forall v\in \tilde V).
\end{equation}
Let $\{(\tilde{\bm{u}}_k, \tilde \lambda_k)\}_{k=1}^K$ denote the resulting eigenpairs. The projected estimator is then approximated by
\begin{equation}
    \tilde g_M(x;t) = \sum_{m=1}^M e^{-\frac{1}{2} \tilde \lambda_m t} \left( \sum_{k=1}^K \tilde{u}_m^k \psi_k(x) \right) \left( \frac{1}{N} \sum_{i=1}^N \sum_{k=1}^K \tilde{u}_m^k \psi_k(X_i) \right), \label{eq: fem projected diffusion estimator}
\end{equation}
where $M \leq K$.

Equation \eqref{eq: fem projected diffusion estimator} is the version used in our experiments and application. In practice, the domain computation is performed once: one constructs a mesh and solves the generalized eigenvalue problem. After that, each new estimate requires only the projected sample moments and evaluation of \eqref{eq: fem projected diffusion estimator} for the desired bandwidth and grid. This one-time domain computation is central to the practical value of PDKDE. All simulations and the Oahu application therefore use the finite-element estimator $\tilde g_M$ in \eqref{eq: fem projected diffusion estimator}. The consistency results in Section~\ref{sec:truncation} are stated for the exact projected estimator $g_M$; the finite-element approximation is a numerical discretization of that estimator.

The additional discretization error incurred by replacing $g_M$ with $\tilde g_M$ can be analyzed using standard finite-element approximation theory for elliptic eigenvalue problems \citep{boffi2010finite, babuska1989finite}. For applied use, the important check is that the mesh resolves the boundary features and eigenfunctions retained in the truncation. We therefore keep the mesh fixed within each comparison, report the mesh parameters used to construct the eigenbasis, and interpret changes in the fitted maps under mesh refinement as numerical sensitivity rather than statistical variability.

More implementation details, including the finite-element settings and cross-validation implementation, are given in the appendices.

\section{Numerical Experiments}\label{sec:experiments}

We now examine whether the projected representation solves the applied failures that motivate the paper. The experiments are designed as controlled stress tests for three properties needed in spatial intensity mapping on complex domains: (i) accurate estimation near boundaries, (ii) resistance to erroneous smoothing across geometric barriers, and (iii) fast repeated computation after the domain has been fixed.

\subsection{Experimental setting}

We consider two-dimensional domains with unit area, $D\subset\mathbb{R}^2$, that capture two geometric stress tests most relevant to the Oahu application. Simulation 1 studies boundary concentration on a convex polygon, where leakage and boundary bias dominate. Simulation 2 considers a non-convex U-shaped domain, where high-density and low-density regions can be close in Euclidean distance while remaining well separated by the domain geometry. Together these settings test near-boundary hotspots and smoothing across an artificial barrier; additional domain geometries such as holes and narrow corridors would be natural extensions, but the present experiments isolate the two failures most directly visible in the application. Figure~\ref{fig: domain and data} shows the two settings.

\begin{figure}[t]
    \centering
    \includegraphics[width=0.7\textwidth]{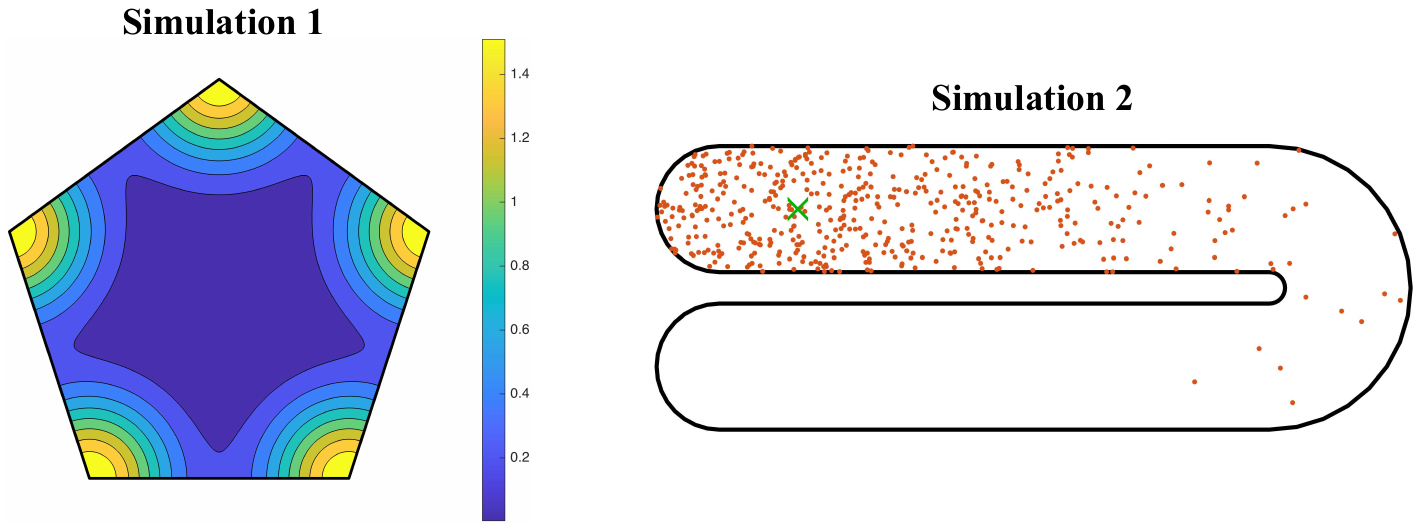}
    \caption{Domains and data-generating mechanisms used in the simulations. Simulation 1 samples from a Gaussian mixture truncated to a pentagon. Simulation 2 starts at the green dot and samples from Brownian motion reflected at the boundary.}
    \label{fig: domain and data}
\end{figure}

We compare the following methods.

\begin{enumerate}
    \item \textbf{Gaussian KDE (GKDE)}: standard Gaussian KDE with bandwidth chosen by the Improved Sheather-Jones (ISJ) method.
    \item \textbf{Renormalized Gaussian KDE (RGKDE)}: Gaussian KDE with the renormalization \eqref{eq: renormalization kernel}. The bandwidth is selected by ISJ, and the normalizing constants are approximated by 10,000 Monte Carlo samples.
    \item \textbf{Density Estimator with Partial Differential Equation (DE-PDE)} \citep{ferraccioli2021nonparametric}: penalized-spline density estimation implemented with the \texttt{fdaPDE} package in R \citep{fdaPDE}. The regularization parameter is selected by 3-fold cross-validation. The candidate set is $\{4/10,\allowbreak 1/10,\allowbreak 1/40,\allowbreak 1/160,\allowbreak 1/640\}$. We use quadratic elements with maximum triangle area $0.001$ and minimum angle $30$ degrees.
    \item \textbf{Projected Diffusion KDE (PDKDE)}: our estimator \eqref{eq: fem projected diffusion estimator}. We fix $M=512$ and consider two bandwidth strategies: ISJ as an external plug-in rule (PDKDE-ISJ) and the proposed least-squares cross-validation criterion (PDKDE-LSCV), optimized over $\log t$ as described in Section~\ref{sec:pdkde}. The finite-element mesh is constructed in the MATLAB PDE toolbox using quadratic elements with maximum edge length $0.01$.
\end{enumerate}

For accuracy, we use the integrated squared error
\begin{equation}
    \mathrm{ISE}(\hat f, f) = \int_D |\hat f (x) - f(x)|^2 dx.
\end{equation}
In settings where $f$ is not available in closed form, we report a Monte Carlo approximation of
\begin{equation*}
\mathrm{ISE}(\hat f, f)-\|f\|_{L^2(D)}^2 = \int_D \hat f(x)^2 dx - 2 \int_D \hat f(x) f(x) dx,
\end{equation*}
which preserves the ranking across estimators because $\|f\|_{L^2(D)}^2$ is constant across methods. To assess behavior specifically at the boundary, we also use
\begin{equation}
    \mathrm{ISE}_{\mathrm{boundary}}(\hat f, f) = \Exp_{x\sim U(\partial D)} |\hat f (x) - f(x)|^2,
\end{equation}
where $U(\partial D)$ denotes the uniform distribution on the boundary with respect to arc length.

To compare computation times, we consider four practically relevant queries. Given a domain $D$, dataset $S$, bandwidth parameter $t$, and evaluation set $P$, let $E = \{\hat f(x)\,|\,x\in P\}$.
\begin{enumerate}
    \item \textbf{Query A}: given $(D,S,t,P)$, compute $E$.
    \item \textbf{Query B}: fix $D$ (all domain-specific preprocessing has been done), and given $(S,t,P)$, compute $E$.
    \item \textbf{Query C}: fix $D$ and $S$, and given $(t,P)$, compute $E$.
    \item \textbf{Query D}: fix $D$, and given $(S,P)$, select the hyperparameter from the data and compute $E$.
\end{enumerate}
Query A measures end-to-end construction with no domain preprocessing. Queries B and C isolate the benefit of reusing the domain basis when the data, bandwidth, or evaluation points change. Query D measures the total time for automated smoothing, which is especially relevant when many maps must be produced for the same study region.

\subsection{Results from synthetic data}

Figure~\ref{fig: estimator visualization} provides a qualitative comparison on the U-shaped domain. GKDE and RGKDE smooth across the cavity because they rely on Euclidean distance, which produces a map that is incompatible with the geometry of the support. DE-PDE and PDKDE both respect the geometry, but PDKDE retains a direct analytic representation of both the estimator and the bandwidth criterion.

\begin{figure}
    \centering
    \includegraphics[width=0.85\textwidth]{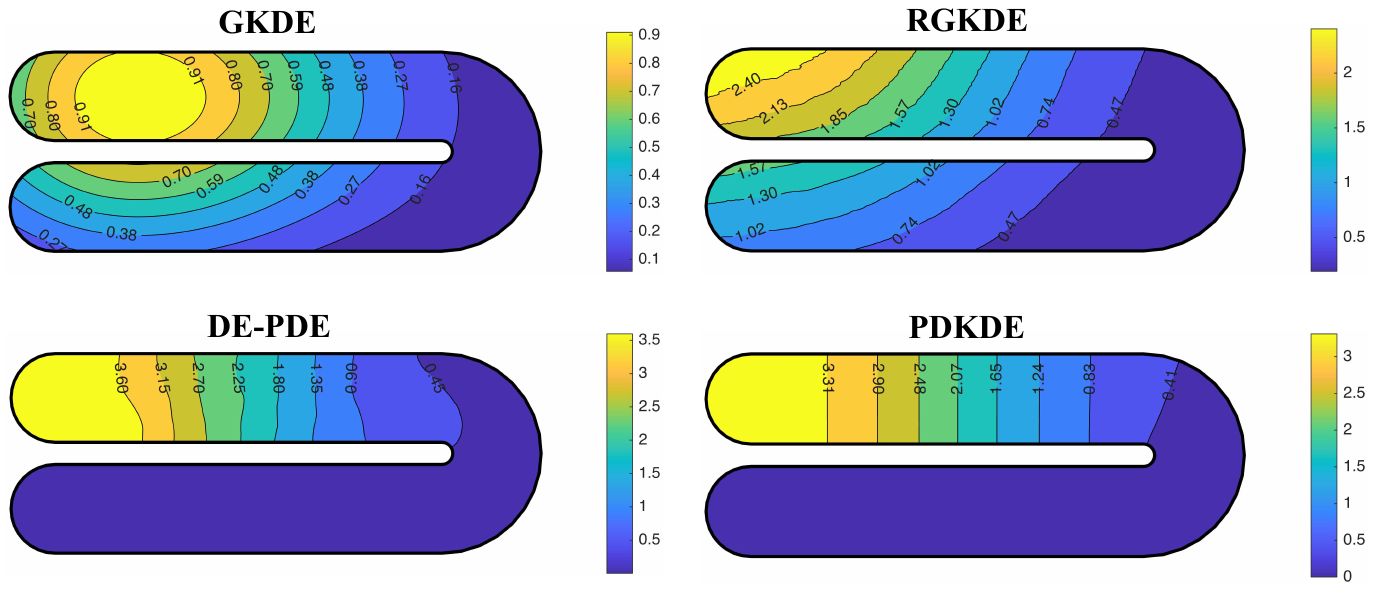}
    \caption{Estimated densities for a dataset of size $N=500$ from Figure~\ref{fig: domain and data}. For GKDE, RGKDE, and PDKDE the bandwidth parameter is fixed at $t=0.1$; for DE-PDE the regularization parameter is $\lambda=0.1$.}
    \label{fig: estimator visualization}
\end{figure}

Tables~\ref{table: ISE, convex}--\ref{table: ISE, ushape1} provide quantitative results. PDKDE-LSCV performs best in all reported settings. The gain over PDKDE-ISJ is notable, especially on the non-convex domain, which suggests that a domain-aware cross-validation criterion is preferable to a Euclidean plug-in bandwidth rule when the support is irregular. Table~\ref{table: boundary ISE, convex} further shows that this advantage is especially pronounced at the boundary. Because PDKDE is a finite spectral estimator, these accuracy summaries should be interpreted together with the fixed truncation level and mesh settings above; in applications, $M$ and mesh refinement should be checked until maps and selected bandwidths are stable.

The computational results in Table~\ref{table: computation time} highlight where the projected formulation matters most. Query A includes the one-time domain computation, so GKDE remains fastest in that fully cold-start scenario. Once the domain basis is available, however, PDKDE becomes the fastest method for Queries B--D and is dramatically faster than DE-PDE and RGKDE. These are exactly the scenarios relevant for repeated analyses on a fixed spatial domain, including the time-binned Oahu application below.

The experiments therefore support the main design goal of the method. PDKDE combines accurate boundary-aware estimation with a substantial reduction in computation after the domain computation, bridging the gap between the statistical appeal of diffusion KDE and the practical demands of applied spatial analysis.

\begin{table}[t]
    \centering
    \caption{Simulation 1: mean (standard deviation) of $\operatorname{ISE}(\hat f, f)$ over 20 datasets. Best results are in bold.}
    \label{table: ISE, convex}
    \large{\begin{tabular}{lccc}
        \hline
        Method & $N=100$ & $N=1000$ & $N=10000$ \\ \hline
        GKDE & 0.4534 (0.0747) & 0.1824 (0.0261) & 0.0781 (0.0044) \\
        RGKDE & 0.2176 (0.0602) & 0.0750 (0.0168) & 0.0229 (0.0026) \\
        DE-PDE & 0.2627 (0.0665) & 0.0635 (0.0183) & 0.0138 (0.0029) \\
        PDKDE-ISJ & 0.2191 (0.0701) & 0.0818 (0.0181) & 0.0256 (0.0028) \\
        PDKDE-LSCV & \textbf{0.2102} (0.0633) & \textbf{0.0542} (0.0158) & \textbf{0.0122} (0.0020) \\
        \hline
    \end{tabular}}
\end{table}

\begin{table}[t]
    \centering
    \caption{Simulation 1: mean (standard deviation) of $\operatorname{ISE}_{\mathrm{boundary}}(\hat f, f)$ over 20 datasets. Best results are in bold.}
    \label{table: boundary ISE, convex}
    \large{\begin{tabular}{lccc}
        \hline
        Method & $N=100$ & $N=1000$ & $N=10000$ \\ \hline
        GKDE & 2.6348 (0.2633) & 2.0995 (0.1822) & 1.9787 (0.0499) \\
        RGKDE & 0.7506 (0.2774) & 0.3608 (0.1413) & 0.1216 (0.0441) \\
        DE-PDE & 0.8331 (0.2429) & 0.2131 (0.0835) & 0.0639 (0.0283) \\
        PDKDE-ISJ & 0.7060 (0.2778) & 0.3406 (0.1350) & 0.0940 (0.0270) \\
        PDKDE-LSCV & \textbf{0.6805} (0.2335) & \textbf{0.2047} (0.0808) & \textbf{0.0418} (0.0105) \\
        \hline
    \end{tabular}}
\end{table}

\begin{table}[t]
    \centering
    \caption{Simulation 2: mean (standard deviation) of the Monte Carlo approximation to $\operatorname{ISE}(\hat f, f) - \|f\|_{L^2(D)}^2$ over 20 datasets. Best results are in bold.}
    \label{table: ISE, ushape1}
    \large{\begin{tabular}{lccc}
        \hline
        Method & $N=100$ & $N=1000$ & $N=10000$ \\ \hline
        GKDE & -2.5308 (0.0582) & -2.7438 (0.0168) & -2.8228 (0.0065) \\
        RGKDE & -2.6905 (0.0673) & -2.8293 (0.0196) & -2.8726 (0.0039) \\
        DE-PDE & -2.8053 (0.0406) & -2.8895 (0.0046) & -2.8940 (0.0022) \\
        PDKDE-ISJ & -2.7451 (0.0596) & -2.8502 (0.0177) & -2.8865 (0.0032) \\
        PDKDE-LSCV & \textbf{-2.8337} (0.0406) & \textbf{-2.8916} (0.0058) & \textbf{-2.8958} (0.0011) \\
        \hline
    \end{tabular}}
\end{table}

\begin{table}[t]
    \centering
    \caption{Computation time for each query (seconds), with $N=20000$ and $|P|=20000$. For Query D, GKDE and RGKDE use ISJ bandwidth selection, PDKDE uses LSCV, and DE-PDE selects from five candidates using 3-fold cross-validation. Fastest results are in bold.}
    \label{table: computation time}
    \large{\begin{tabular}{lrrrr}
        Method & Query A & Query B & Query C & Query D \\ \hline
        GKDE & \textbf{7.48} & 7.48 & 7.48 & 8.05 \\
        RGKDE & 154.59 & 154.59 & 154.59 & 256.93 \\
        DE-PDE & 31.65 & 31.65 & 31.65 & 232.79 \\
        PDKDE & 27.62 & \textbf{0.80} & \textbf{0.26} & \textbf{0.83} \\\hline
    \end{tabular}}
\end{table}

\section{Application: Oahu theft and larceny intensity mapping}\label{sec:application}

We now return to the motivating application: estimating the spatial intensity of theft and larceny incidents on Oahu, Hawaii, from 11/27/2024 to 5/28/2025. The data were obtained from the Honolulu Police Department\footnote{\url{https://data.honolulu.gov/Public-Safety/HPD-Crime-Incidents/vg88-5rn5/about_data}; accessed May 29, 2026.}. Our goal is descriptive spatial density estimation: we seek maps that faithfully summarize where reported incidents are concentrated on the island. We do not use the analysis to explain the causes of incidents or to prescribe enforcement decisions.

This application is demanding for a density estimator for two reasons. First, the coastline is a hard support constraint, and many incidents occur close to it. A method that pulls density inland or assigns mass to the ocean changes the interpretation of coastal hotspots. Second, the domain is fixed but the relevant data subset changes when the analysis is stratified by time of day. A useful method must therefore respect the island geometry and allow repeated estimation without repeated PDE solves.

For reproducibility, the analysis uses HPD Crime Incidents records from the City and County of Honolulu open data portal with \texttt{type} equal to \texttt{THEFT/LARCENY} and event dates from 2024-11-27 through 2025-05-28. Records with missing or invalid event times, longitude, or latitude are removed, and repeated coordinates are retained as repeated reported incidents rather than jittered. Times are grouped into three-hour intervals in local Hawaii time. Longitude--latitude coordinates and the Oahu support polygon are projected to WGS 84 / UTM zone 4N (EPSG:32604) before mesh construction and density evaluation; the same projected coordinates are used for all methods. The raw download, access date, support polygon, point-filtering script, mesh file, and $200\times 200$ evaluation grid are fixed across all time windows and are included in the replication archive. The Oahu domain is constructed once from this projected coastline polygon, with no data-dependent changes to the boundary. The Oahu mesh is constructed in the MATLAB PDE toolbox using the same element order and mesh-resolution rule as in Section~\ref{sec:experiments}, the first $M=512$ Neumann eigenpairs are retained, and the same basis is reused for every time slot.

We divide the observations into eight three-hour time slots, $S_1,\dotsc,S_8$, with $n_j = |S_j|$, and estimate a separate conditional spatial density $\hat f_j(x)$ for each slot. To compare activity across time windows, we visualize spatial intensity surfaces proportional to $n_j\hat f_j(x)$, so that both the within-slot spatial distribution and the slot-specific event count are reflected. For visualization we evaluate the estimators on a $200\times 200$ grid and retain grid points inside the Oahu support. The computational settings are the same as in the simulation study, except that the candidate regularization parameters for DE-PDE are rescaled to $\{0.005\cdot 4, 0.005, 0.005/4, 0.005/16, 0.005/64\}$ to reduce severe oversmoothing. All randomized components used in the numerical approximations, such as Monte Carlo normalizers and cross-validation folds, are run with fixed seeds documented in the replication code; software versions and timing hardware are also reported there.

Figure~\ref{figure: honolulu, data} shows a clear temporal and spatial pattern. Reported incident intensity rises during the day, peaks around 12:00--15:00, and decreases after midnight. Spatially, incidents are concentrated around Waikiki Beach and a few other coastal areas. The coastline is therefore not a minor detail of the problem; it is central to the interpretation of the resulting hotspot maps.

The computational advantage of PDKDE is visible in the time-window analysis. Table~\ref{table: oahu runtime} separates the one-time domain computation from the repeated estimation cost. Across all eight time slots, total online computation times were 2.2 seconds for GKDE, 38.3 minutes for RGKDE, 33.7 minutes for DE-PDE, and 2.9 seconds for PDKDE, after a separate one-time PDKDE domain computation of 31.6 seconds. Thus, once the domain basis has been computed, PDKDE updates the estimate in about 0.4 seconds per time frame. Even including the one-time eigenbasis computation, the eight-map PDKDE analysis takes less than one minute, while the geometry-aware competitors take more than half an hour. This timing is important because boundary-aware alternatives are useful only if they can be recomputed at the scale at which the spatial question is asked.

\begin{table}[tbp]
    \centering
    \caption{Oahu timing decomposition for eight three-hour maps. For PDKDE, the one-time domain cost includes construction of the Oahu mesh and computation of the first $M=512$ Neumann eigenpairs. Online time includes coefficient updates, bandwidth selection, and grid evaluation for the eight time windows.}
    \label{table: oahu runtime}
    \begin{tabular}{lrrr}
    \hline
    Method & One-time domain cost & Online time for 8 maps & Total time used for maps \\ \hline
    GKDE & -- & 2.2 s & 2.2 s \\
    RGKDE & -- & 38.3 min & 38.3 min \\
    DE-PDE & -- & 33.7 min & 33.7 min \\
    PDKDE & 31.6 s & 2.9 s & 34.5 s \\ \hline
    \end{tabular}
\end{table}

Figure~\ref{figure: honolulu, pdkde} shows the PDKDE-LSCV estimates over the full day. The estimator places intensity along the coastline when supported by the data while still producing smooth and interpretable hotspot maps. Results for the competing methods are provided in Figures~\ref{figure: honolulu, gkde}--\ref{figure: honolulu, depde} in the appendices.

To compare the methods visually, Figures~\ref{figure: honolulu, waikiki} and \ref{figure: honolulu, 18} focus on the 18:00--21:00 time window. The Waikiki zoom-in in Figure~\ref{figure: honolulu, waikiki} gives a clear example of boundary bias in GKDE: density is pulled away from the coast, creating an artificial low-density strip near the boundary. The boundary-aware methods place mass directly along the coastline, but they differ in smoothness and geometric fidelity.

The whole-island comparison in Figure~\ref{figure: honolulu, 18} highlights the practical effect of boundary handling and bandwidth selection. Under the ISJ bandwidth used here, GKDE and RGKDE produce rougher surfaces with several isolated local peaks, consistent with the fact that the bandwidth rule is not adapted to the constrained domain. DE-PDE is smoother, but with the penalty grid used here it suppresses some smaller clusters away from the main southern cluster. PDKDE-LSCV gives a coastline-constrained descriptive surface that remains smooth while retaining localized clusters visible in the data.

Overall, the Oahu application shows PDKDE in the setting for which it is intended. It produces coastline-aware descriptive intensity maps, reduces the most visible artifacts of Euclidean smoothing in this application, and makes repeated boundary-aware estimation feasible on a realistic geographic domain. Code for the simulation study, Oahu preprocessing, mesh construction, eigenbasis computation, LSCV evaluation, and figure generation will be made available in an anonymized repository for review and in a permanent public repository upon acceptance.

\begin{figure}[tbp]
    \centering
    \includegraphics[width=0.85\textwidth]{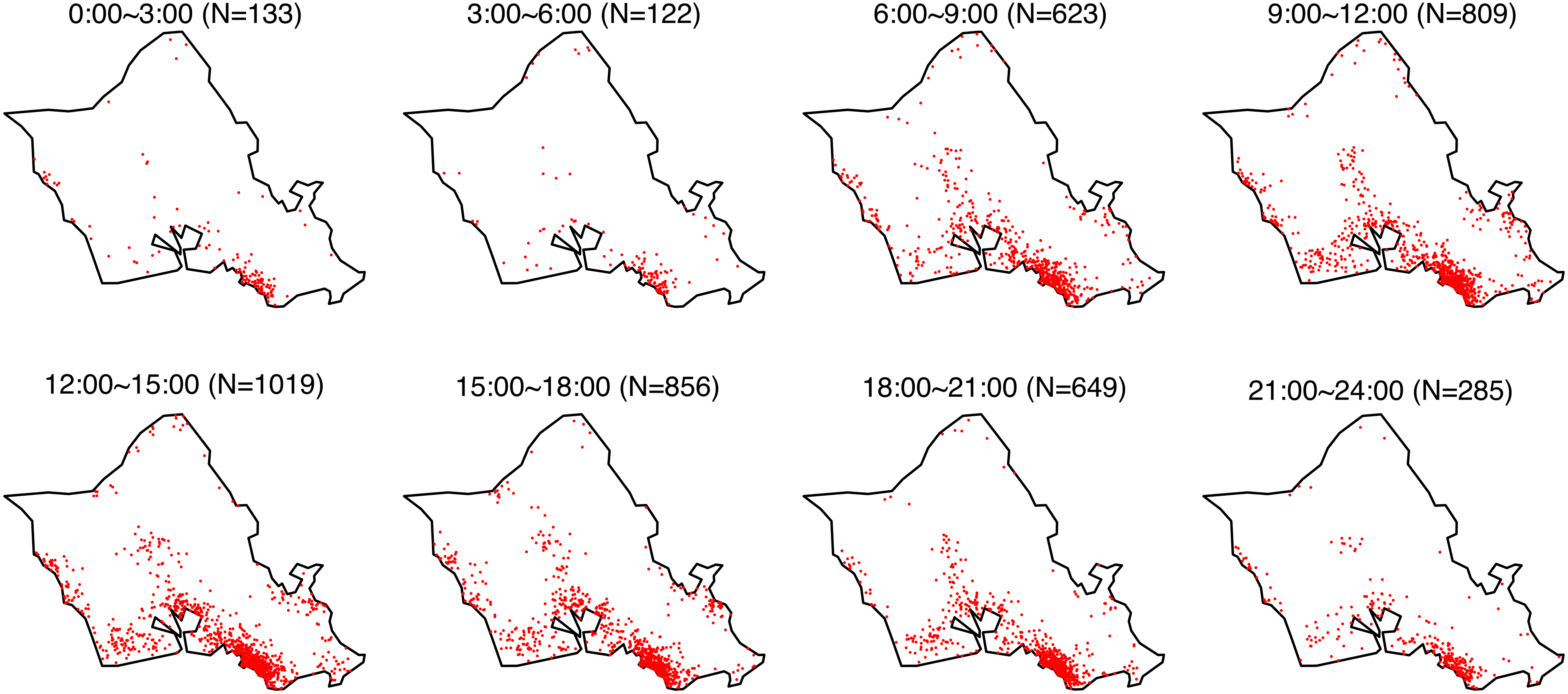}
    \caption{Theft and larceny incidents on Oahu, Hawaii, from 11/27/2024 to 5/28/2025, grouped into three-hour intervals and plotted by location.}
    \label{figure: honolulu, data}
\end{figure}

\begin{figure}[tbp]
    \centering
    \includegraphics[width=0.90\textwidth]{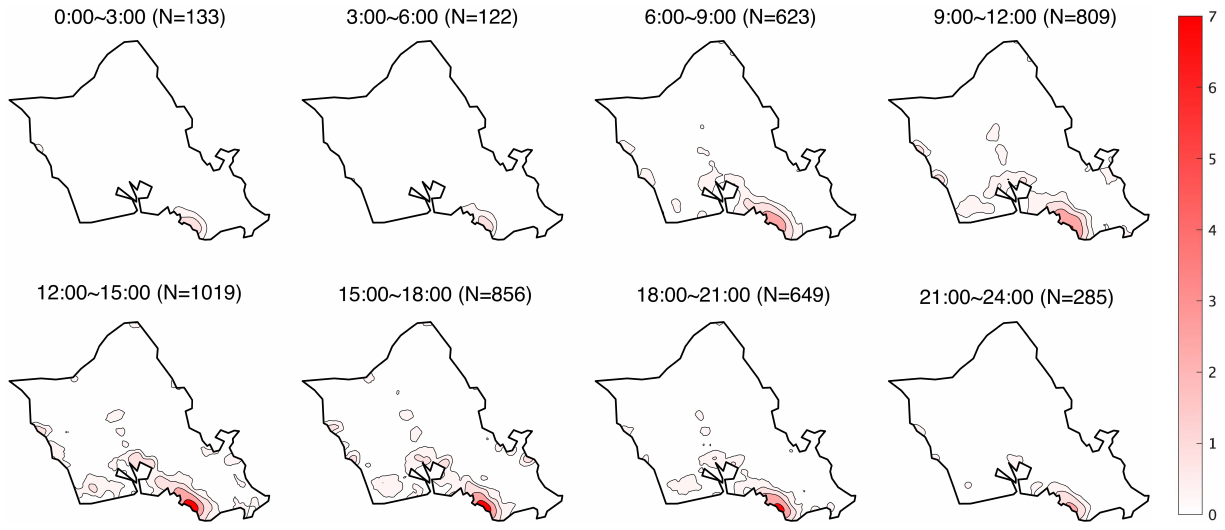}
    \caption{PDKDE estimates for theft and larceny incidents on Oahu, Hawaii, from 11/27/2024 to 5/28/2025, shown separately for each three-hour interval. The maps are constrained to the island domain and preserve intensity along coastal regions.}
    \label{figure: honolulu, pdkde}
\end{figure}

\begin{figure}[tbp]
    \centering
    \includegraphics[width=0.85\textwidth]{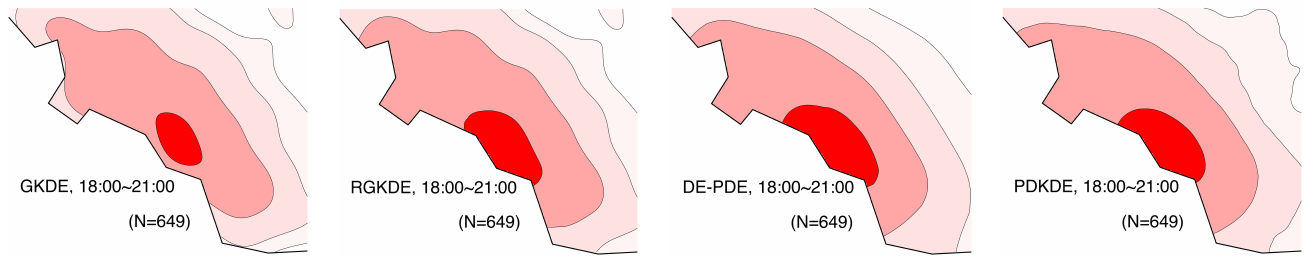}
    \caption{Estimator behavior near Waikiki Beach, 18:00--21:00. From left to right: GKDE, RGKDE, DE-PDE, and PDKDE-LSCV.}
    \label{figure: honolulu, waikiki}
\end{figure}

\begin{figure}[tbp]
    \centering
    \includegraphics[width=0.95\textwidth]{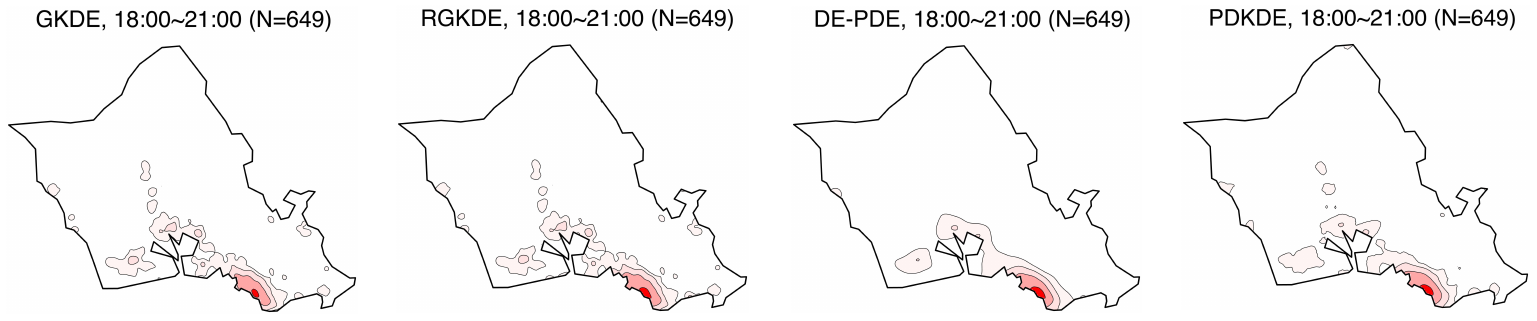}
    \caption{Estimator behavior over Oahu, 18:00--21:00. From left to right: GKDE, RGKDE, DE-PDE, and PDKDE-LSCV.}
    \label{figure: honolulu, 18}
\end{figure}

\section{Discussion}\label{sec:discussion}

We have proposed PDKDE as a boundary-aware method for spatial density estimation on complex domains. The motivating problem is applied: when events are observed on a geographically constrained region, the estimated intensity surface should respect the support, adapt to the domain geometry, and remain fast enough to recompute when the data subset or smoothing level changes. Standard Euclidean kernels fail the first two requirements, while direct diffusion KDE can fail the computational requirement.

PDKDE addresses this gap by keeping the boundary-aware and geometry-respecting features of diffusion KDE while replacing repeated PDE computation with a spectral representation. Once the domain basis is available, density estimation and bandwidth selection reduce to explicit calculations involving projected sample coefficients. In the two simulation settings considered, this construction reduces boundary error and avoids the most visible Euclidean smoothing artifact across the non-convex barrier. The Oahu application shows the same advantage in a real coastal domain, where reported incident intensity is concentrated near the boundary and the maps must be recomputed across time windows.

The method is most useful in settings where the domain is complex but reused: islands, watersheds, lakes, ecological reserves, administrative regions, and urban areas with physical barriers. Its main limitation is the initial eigenbasis computation. For a single small analysis on a simple domain, that cost may not be worthwhile relative to simpler boundary corrections. The method also assumes that the support is fixed; if the domain itself changes over time, the eigenbasis must be recomputed. Very fine coastline detail, many holes, or thin corridors require mesh refinement and therefore larger eigenproblems. Finally, the present construction uses isotropic diffusion on a two-dimensional region. Network-constrained events, anisotropic movement, heterogeneous media, spatiotemporal barriers, and higher-dimensional domains require extensions of the operator and numerical implementation. These limitations do not undermine the Oahu analysis, where the domain is fixed and reused across time windows, but they define the settings in which additional methodological work is needed.

\begin{acks}[Acknowledgements]
Takumi Nakagawa's second affiliation is Fujitsu Limited. Takafumi Kanamori's second affiliation is Center for Advanced Intelligence Project, RIKEN.
This work was partially supported by JSPS KAKENHI Grant Numbers 20H00576, 23H03460, and 24K14849.
\end{acks}

% \clearpage

\appendix

\numberwithin{equation}{section}
\setcounter{equation}{0}
\counterwithin{theorem}{section}
\counterwithin{figure}{section}
\counterwithin{table}{section}

\section{Notation List}
\begin{itemize}
    \item $\Exp [X]$: expectation of $X$
    \item $\Var [X]$: variance of $X$
    \item $\Delta_x$: Laplacian with respect to $x$
    \item $\delta(x)$: Dirac delta function
    \item $C(D)$: the set of continuous functions on $D$
    \item $C^2(D)$: the set of twice continuously differentiable functions on $D$
    \item $L^2(D)$: the set of functions such that $\int_D f^2(x)dx < \infty$
    \item $\langle f, g \rangle_{L^2(D)} := \int_D f(x) g(x) dx$
    \item $\|f\|_{L^2(D)}^2 := \langle f, f\rangle_{L^2(D)}$
    \item $L^\infty(D)$: the set of functions such that $\sup_{x\in D} |f(x)| < \infty$
    \item $\|f\|_{\infty} := \sup_{x\in D} |f(x)|$
    \item $a(n) = O(b(n))$ as $n\rightarrow\infty$ means that $\exists C > 0, \exists n_0, \forall n > n_0, a(n) \leq C b(n)$.
    \item $a(n) = o(b(n))$ as $n\rightarrow \infty$ means that $\forall C>0, \exists n_0, \forall n > n_0, a(n) \leq C b(n)$.
    \item $a(n) = \omega(b(n))$ as $n\rightarrow \infty$ means that $\forall C>0, \exists n_0, \forall n > n_0, a(n) \geq C b(n)$.
    \item $a(n) = \Theta(b(n))$ means that $a(n) = O(b(n))$ and $b(n) = O(a(n))$
    \item $\MISE(\hat f):=\Exp\|\hat f-f\|_{L^{2}(D)}^{2}$ for a density estimator $\hat{f}$ of $f$
\end{itemize}

%\newpage

\section{Additional Figures}

Here, we present additional figures related to the real data application.
\begin{figure}[ht]
    \centering
    \includegraphics[width=0.95\textwidth]{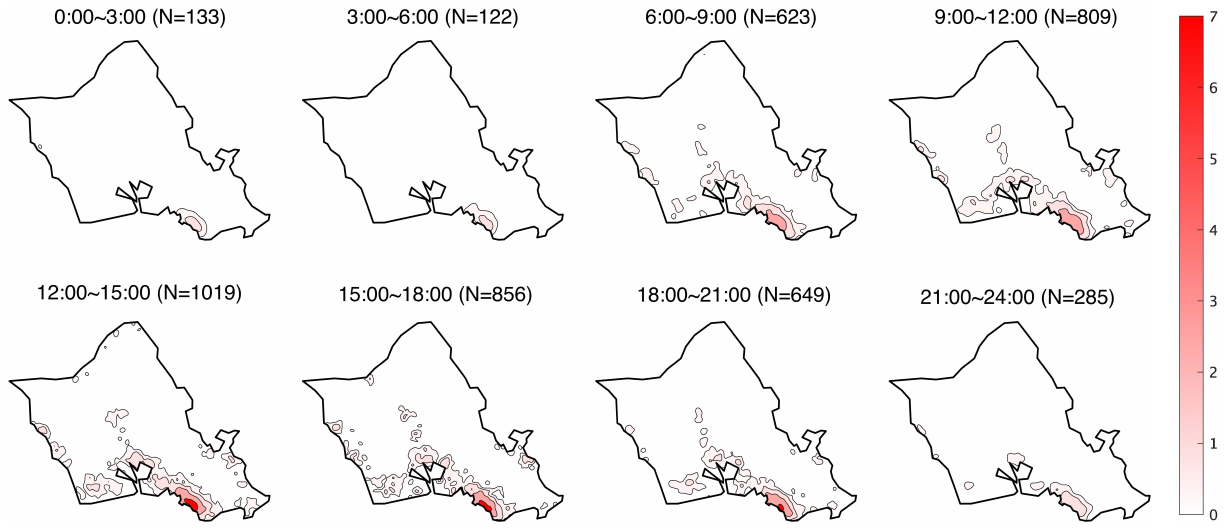}
    \caption{Nonparametric density estimation using GKDE for theft/larceny data from 11/27/2024  to 5/28/2025 in Oahu, Hawaii. Crime occurrences are divided into three hour increments and plotted by location.}
    \label{figure: honolulu, gkde}
\end{figure}

\begin{figure}[ht]
    \centering
    \includegraphics[width=0.95\textwidth]{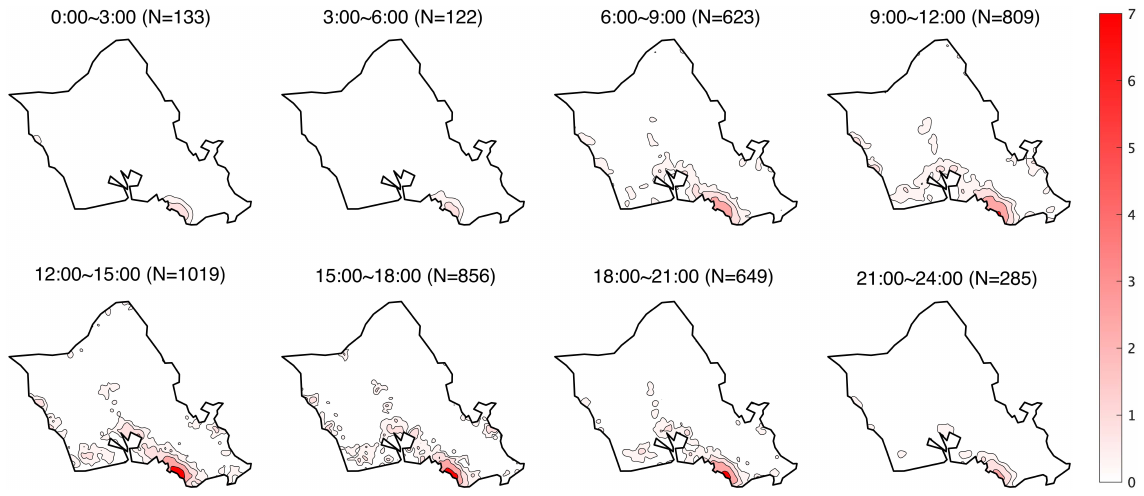}
    \caption{Nonparametric density estimation using RGKDE for theft/larceny data from 11/27/2024  to 5/28/2025 in Oahu, Hawaii. Crime occurrences are divided into three hour increments and plotted by location.}
    \label{figure: honolulu, rgkde}
\end{figure}

\begin{figure}[ht]
    \centering
    \includegraphics[width=0.95\textwidth]{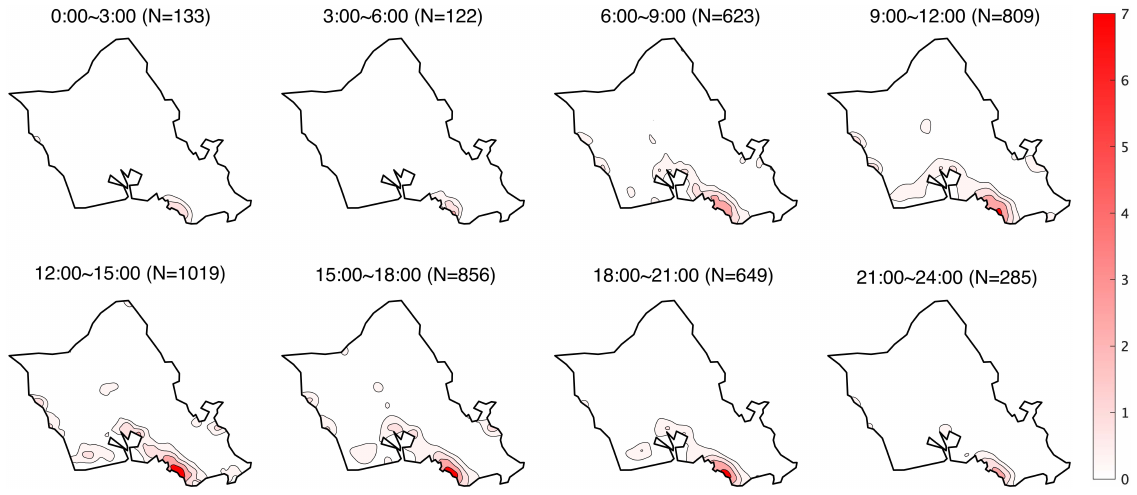}
    \caption{Nonparametric density estimation using DE-PDE for theft/larceny data from 11/27/2024  to 5/28/2025 in Oahu, Hawaii. Crime occurrences are divided into three hour increments and plotted by location.}
    \label{figure: honolulu, depde}
\end{figure}

%\newpage

\section{Bandwidth Selection}\label{SM: bandwidth selection}

Our goal is to select a bandwidth, $\sqrt{t}$, that minimizes the integrated squared error of the estimator, $\int_D |g_M(x;t)-f(x)|^2 dx = \int_D g_M^2(x;t)dx - 2 \int_D g_M(x;t)f(x)dx + \int_D f^2(x) dx$.
The first term can be written as
\begin{align}
    \int_D g_M^2(x;t)dx = \sum_{m=1}^M e^{-\lambda_m t} \left( \frac{1}{N} \sum_{i=1}^N u_m(X_i) \right)^2, \label{eq: lscv_1}
\end{align}
and the second term can be approximated as
\begin{align}
    \int_D g_M(x;t)f(x)dx &\approx \frac{1}{N} \sum_{j=1}^N g_M^{(-j)} (X_j)\\
    &= \frac{1}{N} \sum_{j=1}^N \sum_{m=1}^M e^{-\frac{1}{2}\lambda_m t} u_m(X_j) \frac{1}{N-1} \left( \sum_{i=1}^N u_m(X_i) - u_m(X_j) \right)\\
    &= \frac{1}{N(N-1)} \sum_{m=1}^M e^{-\frac{1}{2}\lambda_m t} \left\{ \left(\sum_{i=1}^N u_m(X_i)\right)^2 - \sum_{i=1}^N u_m^2(X_i) \right\}, \label{eq: lscv_2}
\end{align}
using leave-one-out cross-validation.
Here, the leave-one-out estimator,
\begin{equation}
    g_M^{(-j)} = \sum_{m=1}^M e^{-\frac{1}{2}\lambda_m t} \left(\frac{1}{N-1} \sum_{i\in [N] \setminus \{j\}} u_m(X_i) \right) u_m,
\end{equation}
is an estimator constructed by removing a sample, $X_j$.
From eqs.~\eqref{eq: lscv_1} and \eqref{eq: lscv_2},
\begin{align}
    \operatorname{LSCV}(t) &= \frac{1}{N^2} \sum_{m=1}^M e^{-\lambda_m t} \left( \sum_{i=1}^N u_m(X_i) \right)^2 \nonumber \\ 
    &- \frac{2}{N(N-1)} \sum_{m=1}^M e^{-\frac{1}{2}\lambda_m t} \left\{ \left(\sum_{i=1}^N u_m(X_i)\right)^2 - \sum_{i=1}^N u_m^2(X_i) \right\}
\end{align}
and by solving its minimization,
we can expect to obtain a bandwidth that minimizes the integrated squared error.
We can see that 
    $\operatorname{LSCV}(t)$ is an unbiased estimator of $\MISE (g_M(\cdot;t)) - \|f\|_{L^2(D)}^2$;
    i.e., $\Exp\, [\operatorname{LSCV}(t)] = \MISE (g_M(\cdot;t)) - \|f\|_{L^2(D)}^2$.

\section{Constructing the projected diffusion estimator using the finite element method}

Multiplying the eigenvalue equation $-\Delta u = \lambda u$ by any function $v\in C^\infty (\overline{D})$, and integrating it, we obtain the weak form, 
\begin{equation}
    \int_D \langle \nabla u, \nabla v \rangle dx = \lambda \int_D uv dx \qquad (\forall v), \label{eq: eigenvalue problem, weak form}
\end{equation}
from Green's theorem and the Neumann boundary condition regarding $u$.
Solving  \eqref{eq: eigenvalue problem, weak form} within an infinite-dimensional functional space is difficult, so we constrain it on a finite-dimensional functional space and approximate the eigenvalue and eigenfunction (i.e., the Galerkin method).

Let $\tilde V$ be a functional space spanned by a finite number of functions, $\{\psi_k\}_{k=1}^K$.
Thus, the element of $\tilde V$, $u$, is a function written as
\begin{equation}
    u (x) = \sum_{k=1}^K u^k \psi_k (x)
\end{equation}
using $K$ number of constants, $u^k\in\R$.
Consider constraining $u,v$ in the weak form (\eqref{eq: eigenvalue problem, weak form}) to elements in $\tilde V$.
More specifically, we consider the problem of finding $u = \sum_{k=1}^K u^k \psi_k \in \tilde V$ and $\lambda \in \R$ that satisfies
\begin{equation}
    \int_D \langle \nabla u, \nabla v \rangle dx = \lambda \int_D uv dx \qquad (\forall v\in\tilde V).
\end{equation}
From the linearity of the equation, we only need to consider each basis regarding $v$, so we solve for 
\begin{equation}
    \int_D \left\langle \nabla \sum_{k=1}^K u^k \psi_{k}, \nabla \psi_{k'} \right\rangle dx = \lambda \int_D \sum_{k=1}^K u^k \psi_k \psi_{k'} dx \qquad (\forall k'=1,\dotsc,K)
\end{equation}
regarding $\bm u = (u^1,\dotsc,u^K)^\top$ and $\lambda$.
Then we have, 
\begin{equation}
    \sum_{k=1}^K u^k \int_D \left\langle \nabla \psi_k, \nabla \psi_{k'} \right\rangle dx = \sum_{k=1}^K u^k \lambda \int_D \psi_k \psi_{k'} dx \qquad (\forall k'=1,\dotsc,K)
\end{equation}
so, for each $k,l=1,\dotsc,K$, using the following matrices,
\begin{align}
    A_{kl} &= \int_D \langle \nabla \psi_k, \nabla \psi_l \rangle dx,\\
    M_{kl} &= \int_D \psi_k \psi_l dx,
\end{align}
we can rewrite the eigenvalue problem as
\begin{equation}
    \bm A \bm u = \lambda \bm M \bm u. \label{eq: fem eigenvalue problem}
\end{equation}
Therefore, we can approximate the infinite-dimensional eigenvalue and eigenfunction problem to the finite-dimensional eigenvalue and eigenvector problem.

Let the solution to  \eqref{eq: fem eigenvalue problem} be $\{(\tilde{\bm{u}}_k, \tilde \lambda_k)\}_{k=1}^K$, we approximate the projected diffusion estimator by
\begin{equation}
    \tilde g_M(x;t) = \sum_{m=1}^M e^{-\frac{1}{2} \tilde \lambda_m t} \left( \sum_{k=1}^K \tilde{u}_m^k \psi_k(x) \right) \left( \frac{1}{N} \sum_{i=1}^N \sum_{k=1}^K \tilde{u}_m^k \psi_k(X_i) \right)
\end{equation}
where $M \leq K$.

In the finite element method, the basis, $\{\psi_k\}_{k=1}^K$ is determined by partitioning the domain, $D$, as a mesh.
For example, for the one-dimensional finite element, we set a finite-dimensional functional space, $\tilde V$, such that $u(x)$ is a linear function within each mesh.

\section{Error Analysis of the Diffusion Kernel Density Estimator}

\subsection{Mean Integrated Squared Error}

We calculate the asymptotic behavior of the MISE of the diffusion
estimator, $g(x;t)$, 
\begin{equation}
\MISE(g(\cdot,t))=\Exp\|g(\cdot;t)-f\|_{L^{2}(D)}^{2}.
\end{equation}
First, using the bias and variance of the MISE of each point, we have
\begin{align*}
b(x) & =\Exp g(x;t)-f(x),\\
\sigma^{2}(x) & =\Exp|g(x;t)-\Exp g(x;t)|^{2}.
\end{align*}
We can decompose the MISE as 
\begin{align}
\MISE(g(\cdot,t)) & =\Exp\int_{D}(g(x;t)-f(x))^{2}dx\nonumber \\
 & =\int_{D}\Exp\left|g(x;t)-\Exp g(x;t)+\Exp g(x;t)-f(x)\right|^{2}dx\nonumber \\
 & =\int_{D}\Exp|g(x;t)-\Exp g(x;t)|^{2}dx+\int_{D}(\Exp g(x;t)-f(x))^{2}dx\nonumber \\
 & =\int_{D}\sigma^{2}(x)dx+\int_{D}b^{2}(x)dx,\label{eq: mise of g, bias, variance}
\end{align}
then evaluate the first and second terms.

Define the value, $R(f)$, determined by $f\in L^{2}(D)$ as 
\begin{equation}
    R(f)=\sum_{m=1}^{\infty}\lambda_{m}^{2}\langle f,u_{m}\rangle_{L^{2}(D)}^{2}. \label{eq: definition of Rf}
\end{equation}
When $f$ satisfies the Neumann boundary condition, we have 
\begin{align*}
R(f) & =\sum_{m=1}^{\infty}\langle f,\Delta u_{m}\rangle_{L^{2}(D)}^{2}\\
 & =\sum_{m=1}^{\infty}\langle\Delta f,u_{m}\rangle_{L^{2}(D)}^{2}\\
 & =\|\Delta f\|_{L^{2}(D)}^{2},
\end{align*}
which equates the Laplacian $L^{2}$ norm. Note that $R(f)$
can be defined for $f$ such that it does not satisfy the Neumann boundary condition.

We evaluate the bias and variance using the following lemma. 

\begin{lemma}[asymptotic behavior of the integrated
squared bias and variance.] $D$ is bounded and let $f\in C(\overline{D})$,
and $b(x)$, $\sigma^{2}\left(x\right)$ be the point-wise bias and
variance of the diffusion estimator, $g(x;t)$. Here, if you choose
the bandwidth, $\sqrt{t_{N}}$, such that $t_{N}\rightarrow0\ (N\rightarrow\infty)$,
as $N\rightarrow\infty$, we have 
\begin{equation}
\int_{D}b^{2}(x)dx=o(1).
\end{equation}
In particular, when $R(f)<\infty$, as $N\rightarrow\infty$, we have
\begin{equation}
\int_{D}b^{2}(x)dx=\frac{t_{N}^{2}}{4}R(f)+o(t_{N}^{2}).
\end{equation}
The integrated squared variance is
\[
\int_{D}\sigma^{2}(x)dx=\frac{1}{N(2\sqrt{\pi t_{N}})^{d}}+o\left(\frac{1}{Nt_{N}^{d/2}}\right).
\]

\end{lemma}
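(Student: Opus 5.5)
The whole argument runs through the eigenexpansion \eqref{eq: expanded diffusion kernel}. Write $a_m=\langle f,u_m\rangle_{L^2(D)}$. Since the series converges uniformly for $t>0$, we may integrate term by term:
\[
\Exp g(x;t)=\int_D\kappa(x;y;t)f(y)\,dy=\sum_m e^{-\lambda_m t/2}a_m u_m(x).
\]
Hence $\Exp g(\cdot;t)=e^{t\Delta/2}f$ is the Neumann heat semigroup applied to $f$.

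\textbf{Bias.} By Parseval,
\[
\int_D b^2\,dx=\sum_m (1-e^{-\lambda_m t_N/2})^2a_m^2 .
\]
Each summand is bounded by $a_m^2$ and tends to $0$ as $t_N\to0$. Since $\sum a_m^2=\|f\|^2_{L^2(D)}<\infty$, dominated convergence for series gives $o(1)$. When $R(f)<\infty$, write
\[
(1-e^{-s})^2=s^2\,h(s)^2,\qquad h(s)=\frac{1-e^{-s}}{s}\in(0,1],\quad h(s)\to1 \text{ as } s\to0 .
\]
Then
\[
\int_D b^2\,dx=\frac{t_N^2}{4}\sum_m\lambda_m^2a_m^2\,h(\lambda_m t_N/2)^2 .
\]
The summands are dominated by $\lambda_m^2a_m^2$, which is summable because $R(f)<\infty$, so dominated convergence gives $\frac{t_N^2}{4}R(f)+o(t_N^2)$.

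\textbf{Variance.} Because the $X_i$ are i.i.d.,
\[
\int_D\sigma^2\,dx=\frac1N\Bigl(\int_D\int_D\kappa(x;y;t)^2f(y)\,dy\,dx-\|\Exp g\|_{L^2}^2\Bigr).
\]
The second term is at most $\|f\|^2_{L^2}/N=o(1/(Nt^{d/2}))$. For the first term, the semigroup property and symmetry give $\int_D\kappa(x;y;t)^2\,dx=\kappa(y;y;2t)$. It therefore suffices to show
\[
\int_D\kappa(y;y;2t)f(y)\,dy=(4\pi t)^{-d/2}\int_D f+o(t^{-d/2})=(2\sqrt{\pi t})^{-d}+o(t^{-d/2}),
\]
using $\int_D f=1$.

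\textbf{Main obstacle: the on-diagonal heat kernel.} Two facts are needed.
\begin{enumerate}
\item Interior asymptotics: $(2t)^{d/2}\kappa(y;y;2t)\to(2\pi)^{-d/2}$ for each fixed $y\in D$. The local Gaussian form of the Neumann heat kernel away from the boundary, via Kac's principle of not feeling the boundary, gives $\kappa(y;y;s)=(2\pi s)^{-d/2}(1+o(1))$ uniformly on compact subsets of $D$.
\item A uniform bound $\kappa(y;y;s)\le Cs^{-d/2}$ for $s\le1$ on all of $\overline D$. This is the standard Gaussian upper bound for Neumann heat kernels on Lipschitz, piecewise smooth domains, and is the same ingredient underlying Lemma~\ref{lemma: g_M, L infty bound}. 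Near the boundary the kernel can be up to about $2^d$ times larger (reflection), but it is still $O(s^{-d/2})$.
\end{enumerate}
With these two facts, split $D$ into $D_\delta=\{y:\operatorname{dist}(y,\partial D)>\delta\}$ and the collar. Since $f$ is bounded and the collar has measure $O(\delta)$ for a piecewise smooth boundary, the collar contributes $O(\delta)\,t^{-d/2}$. On $D_\delta$ the uniform interior asymptotics apply. Letting $t\to0$ and then $\delta\to0$ gives the claim.

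As a cross-check I would use the identity $\int_D\kappa(y;y;2t)\,dy=\sum_m e^{-\lambda_m t}$ together with Weyl's law, which gives $|D|(4\pi t)^{-d/2}(1+o(1))$. Weyl's law alone does not handle the weight $f$, which is why the pointwise localization above is needed.
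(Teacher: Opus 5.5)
Your proof is correct. Its skeleton is the paper's: the bias goes through the eigenexpansion and Parseval, and the variance goes through $\int_D\kappa(x;y;t)^2\,dx=\kappa(y;y;2t)$ together with $\|f_t\|_{L^2(D)}^2\le\|f\|_{L^2(D)}^2$, which makes the subtracted term $O(1/N)=o(1/(Nt^{d/2}))$.

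You differ from the paper in how the limits are justified. For the bias, the paper cites It\^o for $f_t\to f$ in $L^2$. When $R(f)<\infty$, it simply moves $\lim_{t\to0}$ inside the infinite sum. Your factorization $(1-e^{-s})^2=s^2h(s)^2$ with $0<h\le1$ supplies the summable majorant $\lambda_m^2a_m^2$ that makes this interchange legitimate.

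For the on-diagonal asymptotic $\int_D f(y)\kappa(y;y;2t)\,dy=(4\pi t)^{-d/2}+o(t^{-d/2})$, the paper invokes the Branson--Gilkey heat-trace expansion. That result is formulated for smooth weights and smooth boundaries. Where it applies, it gives more than is needed: a full expansion whose next term is of order $t^{-(d-1)/2}$ and involves $\int_{\partial D}f$. Your localization argument combines interior asymptotics (not feeling the boundary) with a uniform on-diagonal bound on a collar of vanishing measure. It yields only the leading term with no rate, but that is exactly what the lemma asserts. It also works under the lemma's actual hypotheses: $f$ merely continuous (boundedness, or even integrability, suffices) and a nonsmooth boundary.

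One correction concerns your second fact. The bound $\kappa(y;y;s)\le Cs^{-d/2}$ is not the ingredient behind Lemma~\ref{lemma: g_M, L infty bound}. That lemma uses $\|u_m\|_\infty\lesssim\lambda_m^{(d-1)/4}$ and Weyl's law. Summing those estimates in $\kappa(y;y;s)=\sum_m e^{-\lambda_m s/2}u_m(y)^2$ gives only $\kappa(y;y;s)\lesssim s^{1/2-d}$. For $d\ge2$ that would bound your collar term only by $O(\delta)\,t^{1/2-d}$, which is not negligible against $t^{-d/2}$, and the argument would fail.

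What you actually need is the genuine Gaussian upper bound for the Neumann heat kernel, the same one the paper uses in proving Theorem~\ref{proposition: MSE of diffusion estimator}. That bound holds on Lipschitz domains. Piecewise smoothness alone permits outward cusps, where it fails. So the Lipschitz condition is a tacit regularity assumption you share with the paper, rather than a gap in your reasoning.
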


\begin{proof}

The integrated bias is 
\begin{align*}
\int_{D}b^{2}(x)dx & =\int_{D}\left(\Exp g(x;t)-f(x)\right)^{2}dx\\
 & =\int_{D}\left(\E_{Y}\kappa(x;Y;t)-f(x)\right)^{2}dx\\
 & =\int_{D}\left(\int_{D}f(y)\kappa(x;y;t)dy-f(x)\right)^{2}dx\\
 & =\int_{D}\left(f_{t}(x)-f(x)\right)^{2}dx\\
 & =\|f_{t}-f\|_{L^{2}(D)}^{2}.%
\end{align*}
Therefore, as $t\rightarrow0$, $\int_{D}b^{2}(x)dx\rightarrow0$
\citep{ito1957fundamental}.

Further, assume $R(f)<\infty$. The integrated bias is 
\[
\|f_{t}-f\|_{L^{2}(D)}^{2}=t^{2}\left\Vert \frac{f_{t}-f}{t}\right\Vert _{L^{2}(D)}^{2},
\]
but if we consider 
\begin{align*}
\lim_{t\rightarrow0}\left\Vert \frac{f_{t}-f}{t}\right\Vert _{L^{2}(D)}^{2} & =\sum_{m=1}^{\infty}\lim_{t\rightarrow0}\left\{ \frac{e^{-\frac{1}{2}\lambda_{m}t}-1}{t}\,\langle f,u_{m}\rangle_{L^{2}(D)}\right\} ^{2}\\
 & =\sum_{m=1}^{\infty}\left\{ -\frac{1}{2}\lambda_{m}\,\langle f,u_{m}\rangle_{L^{2}(D)}\right\} ^{2}\\
 & =\frac{1}{4}R(f),
\end{align*}
as $t\rightarrow0$, we have 
\[
\int_{D}b^{2}(x)dx=\frac{t^{2}}{4}R(f)+o(t^{2}).
\]

The variance for each point, $x\in D$, using $Y\sim f(x)$ (independent
of data) is 
\begin{align*}
\sigma^{2}(x) & =\Exp\left|\frac{1}{N}\sum_{i=1}^{N}\kappa(x;X_{i};t)-\E_{Y}\kappa(x;Y;t)\right|^{2}\\
 & =\frac{1}{N}\Exp\left|\kappa(x;X_{1};t)-\E_{Y}\kappa(x;Y;t)\right|^{2}\\
 & =\frac{1}{N}\E_{Y}\left|\kappa(x;Y;t)\right|^{2}-\frac{1}{N}\left\{ \E_{Y}\kappa(x;Y;t)\right\} ^{2}\\
 & =\frac{1}{N}\int_{D}f(y)\kappa^{2}(x;y;t)dy-\frac{1}{N}\left\{ \int_{D}f(y)\kappa(x;y;t)dy\right\} ^{2}.
\end{align*}
Therefore, if we define the progression of time as 
\[
f_{t}(x)=\int_{D}f(y)\kappa(x;y;t)dy,
\]
we have 
\begin{align}
\int_{D}\sigma^{2}(x)dx & =\frac{1}{N}\int_{D}f(y)\int_{D}\kappa^{2}(x;y;t)dxdy-\frac{1}{N}\int_{D}\left\{ \int_{D}f(y)\kappa(x;y;t)dy\right\} ^{2}dx\nonumber \\
 & =\frac{1}{N}\int_{D}f(y)\kappa(y;y;2t)dy-\frac{1}{N}\|f_{t}\|_{L^{2}(D)}^{2}.\label{eq: variance of g}
\end{align}
The first term is the asymptotic expansion of the heat kernel \citep{branson1990asymptotics},
and, as $t\rightarrow0$, is 
\[
\int_{D}f(y)\kappa(y;y;2t)dy=\frac{1}{(2\sqrt{\pi t})^{d}}+o(t^{-d/2}),
\]
and the second term is 
\begin{align*}
\|f_{t}\|_{L^{2}(D)}^{2} & =\sum_{m=1}^{\infty}e^{-\lambda_{m}t}\langle f,u_{m}\rangle_{L^{2}(D)}^{2}\\
 & <\sum_{m=1}^{\infty}\langle f,u_{m}\rangle_{L^{2}(D)}^{2}\\
 & =\|f\|_{L^{2}(D)}^{2}.
\end{align*}
When $t_{N}\rightarrow0$, $Nt_{N}^{d/2}\rightarrow\infty\ (N\rightarrow\infty)$
is satisfied, as $N\rightarrow\infty$, we have 
\[
\int_{D}\sigma^{2}(x)dx=\frac{1}{N(2\sqrt{\pi t_{N}})^{d}}+o\left(\frac{1}{Nt_{N}^{d/2}}\right).
\]

\end{proof}

Finally, we have the following theorem: 

\begin{theorem}[MISE of the diffusion estimator]\label{corollary: MISE of g, Rf}
Let $D$ be a bounded domain, $f\in C(\overline{D})$, and $\sigma^{2}(x)$
and $b(x)$ be the variance and bias of the diffusion estimator $g(x;t)$.
Here, if we choose the bandwidth, $\sqrt{t_{N}}$, to satisfy $t_{N}\rightarrow0$,
$Nt_{N}^{d/2}\rightarrow\infty\ (N\rightarrow\infty)$, as $N\rightarrow\infty$,
we have 
\[
\MISE(g(\cdot;t_{N}))\rightarrow0.
\]
Further, when $R(f)<\infty$, as $N\rightarrow\infty$, we have 
\[
\MISE(g(\cdot;t_{N}))=\frac{1}{N(2\sqrt{\pi t_{N}})^{d}}+\frac{t_{N}^{2}}{4}R(f)+o\left(\frac{1}{Nt_{N}^{d/2}}+t_{N}^{2}\right).
\]
\end{theorem}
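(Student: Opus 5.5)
The theorem follows directly from the bias--variance decomposition \eqref{eq: mise of g, bias, variance} and the preceding lemma on the integrated squared bias and variance. The plan is to write
\[
\MISE(g(\cdot;t_N))=\int_D\sigma^2(x)\,dx+\int_D b^2(x)\,dx,
\]
which is valid for every $N$ and $t>0$ because $g(x;t)$ is an average of i.i.d.\ terms and Fubini applies. Interchanging expectation and integral is legitimate: for fixed $t>0$ the kernel $\kappa(\cdot;\cdot;t)$ is bounded on $\overline D\times\overline D$, by uniform convergence of the series \eqref{eq: expanded diffusion kernel}. Each term is then controlled separately under the stated bandwidth conditions.

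For the first claim, under $t_N\to0$ alone the lemma gives $\int_D b^2\,dx=\|f_{t_N}-f\|_{L^2(D)}^2=o(1)$. This uses the strong continuity of the Neumann heat semigroup on $L^2(D)$: coefficientwise, $e^{-\lambda_m t/2}\langle f,u_m\rangle\to\langle f,u_m\rangle$, and dominated convergence applies to the Parseval sum, with dominating sequence $\langle f,u_m\rangle^2$, summable since $f\in C(\overline D)\subset L^2(D)$.

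For the variance, the lemma shows
\[
\int_D\sigma^2\,dx=\frac{1}{N}\int_D f(y)\kappa(y;y;2t_N)\,dy-\frac{1}{N}\|f_{t_N}\|^2_{L^2(D)}.
\]
The second term is $O(1/N)$, which is $o(1/(Nt_N^{d/2}))$. The first term is $O(t_N^{-d/2}/N)$ by the on-diagonal heat kernel asymptotics; its leading constant is $(2\sqrt{\pi t_N})^{-d}\int_D f=(2\sqrt{\pi t_N})^{-d}$. Under $Nt_N^{d/2}\to\infty$ the variance therefore tends to zero, and adding the two terms gives $\MISE\to0$.

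For the second claim, assume $R(f)<\infty$. The lemma then sharpens the bias to $\tfrac{t_N^2}{4}R(f)+o(t_N^2)$, and the variance expansion $\frac{1}{N(2\sqrt{\pi t_N})^d}+o\!\left(\frac{1}{Nt_N^{d/2}}\right)$ holds as before. Summing the two and merging the remainders into $o\!\left(\frac{1}{Nt_N^{d/2}}+t_N^2\right)$ gives the displayed asymptotic formula.

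The only delicate point, which the lemma takes for granted, is the uniformity of the heat kernel diagonal expansion $\kappa(y;y;2t)\sim(4\pi t)^{-d/2}$. Near $\partial D$ the Neumann kernel on the diagonal is enhanced, up to roughly $2(4\pi t)^{-d/2}$ in a boundary layer of width $O(\sqrt t)$. That layer has measure $O(\sqrt t)$, so its contribution to $\int_D f(y)\kappa(y;y;2t)\,dy$ is $O(t^{-d/2}\sqrt t)=o(t^{-d/2})$, since $f$ is bounded. To handle this I would split $D$ into the boundary layer and the interior, and bound $\kappa(y;y;2t)\lesssim t^{-d/2}$ uniformly on $\overline D$. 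This bound follows from Gaussian upper bounds for the Neumann heat kernel on domains with piecewise smooth boundary. In the interior I would use the local expansion, and both pieces together give the variance asymptotics stated in the lemma.
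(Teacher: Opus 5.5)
Your proposal is correct and follows the paper's route: the theorem is obtained by combining the bias--variance decomposition \eqref{eq: mise of g, bias, variance} with the preceding lemma, absorbing $-\frac1N\|f_{t_N}\|_{L^2(D)}^2=O(1/N)=o\bigl(1/(Nt_N^{d/2})\bigr)$ into the remainder. Your boundary-layer justification of $\int_D f(y)\kappa(y;y;2t)\,dy=(4\pi t)^{-d/2}+o(t^{-d/2})$ for merely continuous $f$ is extra rigor that the paper replaces with a citation to Branson--Gilkey, with two caveats. First, the uniform bound $\kappa(y;y;2t)\lesssim t^{-d/2}$ needs something like a Lipschitz boundary: it fails at outward cusps, which a piecewise smooth boundary allows, though the paper relies on the same bound in its pointwise result. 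Second, the layer should have width $r(t)$ with $\sqrt t\ll r(t)\to 0$, so that the interior correction is genuinely $o(1)$.
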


The first half of Theorem~\ref{corollary: MISE of g, Rf} guarantees that the MISE of the diffusion estimator converges to zero with a weak assumption that $f$ is continuous.
The second half informs the order of the MISE regarding sample size.
Define the asymptotic MISE (AMISE) as
\begin{equation}
    \AMISE (g(\cdot;t)) = \frac{1}{N(2\sqrt{\pi t})^d} + \frac{t^2}{4} R(f).
\end{equation}
The $t$ that minimizes the AMISE is
\begin{equation}
    t^* = \left( \frac{d}{N (2\sqrt{\pi})^d R(f)} \right)^\frac{2}{d+4}, \label{eq: optimal bandwidth}
\end{equation}
and the AMISE under $t^*$ is 
\begin{equation}
    \AMISE (g(\cdot;t^*)) = \frac{d+4}{4} \left( \frac{R(f)}{16 \pi^2 d} \right)^\frac{d}{d+4} \left(\frac{1}{N}\right)^\frac{4}{d+4}.
\end{equation}
This means that, given an appropriate selection of bandwidth, the MISE of the diffusion estimator decreases at the order of $N^{-\frac{4}{d+4}}$.

\subsection{Boundary Consistency}

We also evaluate the point-wise mean squared error (PMSE).
Thus, for the density estimator, $\hat f$, we evaluate 
\begin{equation}
    \PMSE(\hat f(x)) = \Exp |\hat f(x) - f(x)|^2
\end{equation}
by fixing $x\in\overline D$.
Note that the PMSE is definable for $x\in \partial D$ because the diffusion estimator can estimate points on the boundary.

The following theorem guarantees that the diffusion estimator, $g(x;t)$, converges to all points in the distribution, $f(x)$.

\begin{theorem}\label{proposition: MSE of diffusion estimator}
Let $f\in C(\overline D)$. If we select the bandwidth, $\sqrt{t_N}$, to satisfy $\lim_{N\rightarrow \infty}t_N = 0$ and $\lim_{N\rightarrow \infty}Nt_N^{d/2} = \infty$, we have $\PMSE(g(x;t_N))\rightarrow 0$, uniformly, for any $x\in \overline D$.
Further, we have $g(x;t_N)\rightarrow_p f(x)\ (N\rightarrow \infty)$.
\end{theorem}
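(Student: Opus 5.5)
The plan is the usual pointwise bias--variance decomposition,
\begin{equation*}
\PMSE(g(x;t_N)) = \bigl(f_{t_N}(x)-f(x)\bigr)^2 + \sigma^2(x),
\end{equation*}
where $f_t(x)=\int_D \kappa(x;y;t)f(y)\,dy$ is the heat semigroup applied to $f$, as in the proof of the preceding lemma. The two terms are controlled uniformly in $x\in\overline D$ separately. Convergence in probability then follows from Chebyshev's inequality, since $L^2$ convergence implies convergence in probability.

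\emph{Bias.} We need $\sup_{x\in\overline D}|f_t(x)-f(x)|\to 0$ as $t\to 0$ for $f\in C(\overline D)$. The Neumann heat kernel has three properties we will use.
\begin{itemize}
\item It is nonnegative.
\item It conserves mass: $\int_D\kappa(x;y;t)\,dy=1$, because $u_1$ is constant with $\lambda_1=0$ and $\langle 1,u_m\rangle=0$ for $m\ge2$.
\item It concentrates: for every $\eta>0$, $\sup_{x\in\overline D}\int_{D\setminus B(x,\eta)}\kappa(x;y;t)\,dy\to0$ as $t\to0$.
\end{itemize}
Given these, write $f_t(x)-f(x)=\int_D\kappa(x;y;t)\bigl(f(y)-f(x)\bigr)\,dy$ and split the integral into $B(x,\eta)$ and its complement. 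Uniform continuity of $f$ on the compact set $\overline D$ bounds the first piece by $\sup_{|x-y|<\eta}|f(y)-f(x)|$. The second piece is at most $2\|f\|_\infty$ times the tail mass. For concentration I would invoke a Gaussian upper bound for the Neumann heat kernel on domains with piecewise smooth (Lipschitz) boundary,
\begin{equation*}
\kappa(x;y;t)\le C t^{-d/2}e^{-c|x-y|^2/t}, \qquad 0<t\le 1,
\end{equation*}
which is valid up to the boundary. This is the main obstacle: the uniformity at boundary points is exactly where the cited literature (Ito, Davies-type bounds) must be used, and I would assume it rather than prove it.

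\emph{Variance.} We have $\sigma^2(x)\le \frac1N\int_D f(y)\kappa^2(x;y;t)\,dy\le \frac{\|f\|_\infty}{N}\kappa(x;x;2t)$, using the semigroup property $\int_D\kappa(x;y;t)^2\,dy=\kappa(x;x;2t)$. The same Gaussian bound, or the uniform heat-trace/diagonal estimate from \citep{branson1990asymptotics}, gives $\sup_x\kappa(x;x;2t)\le Ct^{-d/2}$. Hence $\sup_x\sigma^2(x)\lesssim (Nt_N^{d/2})^{-1}\to0$.

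Combining the two estimates gives uniform convergence of the PMSE to zero on $\overline D$. For fixed $x$, Markov's inequality then gives $P(|g(x;t_N)-f(x)|>\varepsilon)\le \PMSE/\varepsilon^2\to0$, which is the convergence-in-probability claim.
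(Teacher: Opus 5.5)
Your proposal is correct and follows essentially the paper's route. Both decompose $\PMSE$ into squared bias plus variance, get the bias from uniform convergence $f_t\to f$, and get the variance from a uniform on-diagonal bound $\kappa\lesssim t^{-d/2}$. The difference is that the paper simply cites Ito for both facts, whereas you derive them from positivity, mass conservation, a Gaussian upper bound and the identity $\int_D\kappa^2(x;y;t)\,dy=\kappa(x;x;2t)$. One small caution: the Branson--Gilkey heat-trace asymptotics only control $\int_D\kappa(x;x;t)\,dx$, not $\sup_x\kappa(x;x;t)$, so the uniform variance bound must come from the Gaussian (on-diagonal) estimate, as in your primary route.
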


\begin{proof}
Since the MSE can be decomposed as $\PMSE (g(x,t)) = b^2(x) + \sigma^2(x)$, we show that $b^2(x)$ and $\sigma^2(x)$ both uniformly converge to $0$.
From $\Exp g(x,t) = f_t(x)$, we have $b^2(x) = (f_t(x) - f(x))^2$, and since $f_t(x)$ uniformly converges to $f(x)$ \citep{ito1992diffusion}, $b^2(x)$ uniformly converges to $0$.
Further, from the property of the fundamental solution, there exists $C>0$ that does not depend on $x$, and we have $\sigma^2(x) \leq \frac{1}{N} \Exp_Y |\kappa(x;Y;t)|^2 \leq \frac{C}{Nt^\frac{d}{2}} e^{Ct}$ \citep{ito1992diffusion}.
This uniformly converges to $0$ as $t\rightarrow 0$ and $Nt^\frac{d}{2}\rightarrow \infty$.
\end{proof}

A point to note is that $g(x;t)$ converges to $f(x)$ for all points, including points on the boundary.

\section{Proofs}

\subsection{Proof of Lemma 1}\label{SM:g_M, L infty bound}

Using the triangular inequality, for any data, $\{X_i\}_{i=1}^N$, we have
\begin{align*}
    \|g(\cdot;t) - g_M(\cdot;t)\|_{L^\infty(\overline D)} &= \left\| \sum_{m=M+1}^\infty e^{-\frac{1}{2}\lambda_m t} \left( \frac{1}{N} \sum_{i=1}^N u_m(X_i) \right) u_m \right\|_{L^\infty(\overline D)}\\
    &\leq \sum_{m=M+1}^\infty e^{-\frac{1}{2}\lambda_m t} \left| \frac{1}{N} \sum_{i=1}^N u_m(X_i) \right| \left\| u_m \right\|_{L^\infty(\overline D)} \\
    &\leq \sum_{m=M+1}^\infty e^{-\frac{1}{2}\lambda_m t} \left\| u_m \right\|_{L^\infty(\overline D)}^2.
\end{align*}
Here, there exists constants, $C_1, C_2>0$, that only depend on the domain, $D$, and since $\|u_m\|_{L^\infty(\overline D)}\leq C_1 \lambda_m^\frac{d-1}{4}$ \citep{grieser2002uniform} and Weyl's law, $\lambda_m \sim C_2m^\frac{2}{d}$ holds, we have
\begin{align*}
    \sum_{m=M+1}^\infty e^{-\frac{1}{2}\lambda_m t} \left\| u_m \right\|_{L^\infty(\overline D)}^2 &\lesssim \sum_{m=M+1}^\infty e^{-\frac{1}{2}\lambda_m t} \lambda_m^\frac{d-1}{2} \\
    & \lesssim \sum_{m=M+1}^\infty e^{- C m^{2/d} t} m^\frac{d-1}{d}.
\end{align*}
Note that $C>0$ is a constant that is, for any $m$, $\lambda_m \geq 2 C m^\frac{2}{d}$.

Here, we assume, $M^\frac{2}{d}t\geq \frac{d-1}{2C}$.
The derivative function of $a(z) = e^{- C z^{2/d} t} z^\frac{d-1}{d}$ regarding $z$ is
\begin{equation*}
    \frac{d}{dz} a(z) = \frac{1}{d} \, e^{- C z^{2/d} t} z^{-\frac{1}{d}} \left( d-1-2Cz^\frac{2}{d}t \right),
\end{equation*}
thus $a(z)$ is monotonically decreasing with $z\geq M$.
Thus, because $\sum_{m=M+1}^\infty a(m) \leq \int_M^\infty a(z) dz$, we have
\begin{equation}
    \|g(\cdot;t) - g_M(\cdot;t)\|_{L^\infty(\overline D)} \lesssim \int_M^\infty e^{- C z^{2/d} t} z^\frac{d-1}{d} dz .\label{eq: g_M, int bound}
\end{equation}
Next, we evaluate the integral in the right-hand side of \eqref{eq: g_M, int bound}.
If we variable transform as $w = Cz^{\frac{2}{d}}t$, we have
\begin{equation*}
    \int_M^\infty e^{- C z^{2/d} t} z^\frac{d-1}{d} dz = \frac{d}{2} (Ct)^{\frac{1}{2}-d} \int_{CM^\frac{2}{d}t}^\infty e^{-w} w^{d-\frac{3}{2}} dw.
\end{equation*}
Here, we assume that $M^\frac{2}{d}t$ is large enough to satisfy $e^\frac{CM^{2/d}t}{2} \geq (CM^\frac{2}{d}t)^{d-\frac{3}{2}}$.
Then, for $w \geq CM^\frac{2}{d}t$, we have $e^\frac{w}{2}\geq w^{d-\frac{3}{2}}$, and the following
\begin{equation*}
    \int_{CM^\frac{2}{d}t}^\infty e^{-w} w^{d-\frac{3}{2}} dw \leq \int_{CM^\frac{2}{d}t}^\infty e^{-\frac{w}{2}} dw = 2e^{-\frac{CM^{2/d}t}{2}}
\end{equation*}
holds, and we have
\begin{equation*}
    \int_M^\infty e^{- C z^{2/d} t} z^\frac{d-1}{d} dz \lesssim t^{\frac{1}{2}-d} e^{-\frac{C}{2}M^{2/d}t}.
\end{equation*}
With \eqref{eq: g_M, int bound}, we have
\begin{equation*}
    \|g(x;t) - g_M(x;t)\|_{L^\infty(\overline D)} \lesssim t^{\frac{1}{2}-d} e^{-\frac{C}{2}M^{2/d}t}.
\end{equation*}

\subsection{Proof of Theorem 2}\label{proof: MISE of g_M}

From $t_N\rightarrow 0$ and
\[
    \frac{M_N^{2/d}t_N}{\log(t_N^{-1})}\longrightarrow\infty,
\]
we have
\[
    \exp\left\{\left(\frac{1}{2}-d\right)\log t_N
    -\frac{C}{2}M_N^{2/d}t_N\right\}\longrightarrow 0.
\]
Hence, by Lemma~\ref{lemma: g_M, L infty bound},
\[
    \|g(\cdot;t_N)-g_{M_N}(\cdot;t_N)\|_{L^\infty(D)}
    \longrightarrow 0.
\]
Therefore,
\begin{align*}
    \Exp\|g(\cdot;t_N)-g_{M_N}(\cdot;t_N)\|_{L^2(D)}^2
    &\leq |D|\,
    \|g(\cdot;t_N)-g_{M_N}(\cdot;t_N)\|_{L^\infty(D)}^2 \\
    &\longrightarrow 0.
\end{align*}
Further, Theorem~\ref{corollary: MISE of g, Rf} gives
\begin{align*}
    \MISE(g_{M_N}(\cdot;t_N))
    &\lesssim \MISE(g(\cdot;t_N)) \\
    &\quad + \Exp\|g(\cdot;t_N)-g_{M_N}(\cdot;t_N)\|_{L^2(D)}^2 \\
    &\longrightarrow 0.
\end{align*}

\subsection{Proof of Theorem 3}\label{proof: g_M, weak consistency}

The PMSE can be bounded as
\begin{align*}
    \operatorname{PMSE}(g_{M_N}(x;t_N))
    &\lesssim \Exp|g_{M_N}(x;t_N)-g(x;t_N)|^2
    +\Exp|g(x;t_N)-f(x)|^2 \\
    &\leq \|g_{M_N}(\cdot;t_N)-g(\cdot;t_N)\|_{L^\infty(D)}^2
    +\operatorname{PMSE}(g(x;t_N)).
\end{align*}
Following the argument in Appendix~\ref{proof: MISE of g_M},
\[
    \|g_{M_N}(\cdot;t_N)-g(\cdot;t_N)\|_{L^\infty(D)}
    \longrightarrow 0.
\]
Furthermore, since $\operatorname{PMSE}(g(x;t_N))$ converges uniformly to $0$ by Theorem~\ref{proposition: MSE of diffusion estimator}, the theorem follows.

\bibliographystyle{imsart-nameyear}
\bibliography{ref}

\end{document}